\newtheorem{definition}{Definition}
\newtheorem{proposition}{Proposition}
\newtheorem{theorem}{Theorem}
\newcommand{\id}{\mathrm{id}}
\newcommand{\Tr}{\mathrm{Tr}}
\newcommand{\cra}{\rightarrow_{\mathrm{c}}}
\begin{document}

    \title{Censorship of quantum resources against quantum catalysis}

    \author{Julien Pinske}
    \email{julien.pinske@nbi.ku.dk}
    \affiliation{Niels Bohr Institute, University of Copenhagen, Jagtvej 155 A, DK-2200 Copenhagen, Denmark}
    
    \author{Klaus M\o lmer}
    \affiliation{Niels Bohr Institute, University of Copenhagen, Jagtvej 155 A, DK-2200 Copenhagen, Denmark}

    \date{\today}

    \begin{abstract}
        In quantum censorship, an agency oversees quantum communication in a public-domain network. 
        The agency restricts the users communication to the free states of a quantum resource theory (QRT). 
        Despite quantum correlations being fragile, any realistic censorship leaves behind some quantumness, raising concerns that censorship may be overcome through revival or distillation of quantum resources.
        Here, we introduce censorship protocols that do not require a perfect erasure of a quantum resource, but rather deem censorship successful if users are unable to restore the original quantum state using free operations. 
        We investigate under which conditions censorship is secure, and when it might fail.
        Moreover, we address the issue of account sharing in quantum networks, wherein independent parties assist in transmitting quantum resources to censored users. 
        This connects resource censorship to timely topics such as quantum catalysis and resource-assisted communication.
        Censorship protocols offer a novel perspective on quantum network security, that differs fundamentally from existing approaches such as quantum and post-quantum cryptography.
    \end{abstract}
    
    \maketitle

    \section{Introduction}
    \label{sec:intro}

    The contemporary pursuit of storing, transmitting, and processing quantum information fuels the prospect of a widely accessible quantum internet \cite{K08,RH25}. 
    There, we may see commercial agencies offering access to quantum communication. 
    In such public-domain networks, access to certain quantum resources might be restricted, either for security, policy, or pricing reasons. 
    For example, priority members may be granted access to quantum resources such as entanglement or coherence, while regular users are limited to classical communication.

    In quantum censorship \cite{PS24,PM24}, a dominant agency restricts the users' communication to the free states of a quantum resource theory (QRT) \cite{CG19}. 
    Censorship is applied to each sender–receiver channel, ideally ensuring that only free states can be sent. 
    The key challenge is to suppress the transmission of resource states without disturbing communication with free states. 
    Various quantum resources can be censored \cite{PS24}, including coherence \cite{A06,BC14,LM14,SW18} and asymmetry \cite{BR07,GS08,CA25}. 
    Censorship protocols may also condition their action on classical information about a quantum message \cite{PM24}, enabling secure surveillance of imaginarity \cite{WK21,XG21} and entanglement \cite{HH09,VS14,PAH24}. 
    In contrast, quantum discord \cite{OZ01,MB12,WPM09} and non-Gaussianity \cite{ES02,WP12} are nonconvex resources, that is, they can be created from classical mixing across multiple transmissions, and are therefore not susceptible to censorship.
    Furthermore, there are resources that can be activated, that is, having many copies of free states constitutes a resource.
    This includes cases in which a desired quantum correlation can be concentrated or distilled \cite{R99}. 
    Examples of distillable resources are Bell nonlocality \cite{P12}, bi-separable but not fully-separable states \cite{PV22}, as well as magic states \cite{BK05,AC12}.
    These resources pose a challenge to a decentralized censorship, because it cannot be verified locally, whether a resource is transmitted.

    Another limitation of censorship protocols is their reliance on complete resource erasure, assuming that quantum states are fully converted into classical ones. 
    However, residual quantumness is often inevitable due to physical limitations such as a non-Markovian revival of coherences \cite{LP10,YZ20}, or the thermodynamic cost of resetting a quantum system to a classical register \cite{L61,GP05}.
    Moreover, requiring the complete removal of a quantum resource is an unreasonably stringent form of censorship.
    A more realistic approach would deem censorship secure if the users cannot make practical use of the residual quantum resources they are left with.

    In this work, we propose an operational approach to resource censorship. 
    Rather than requiring perfect filtering of a quantum resource, censorship is deemed successful if the users cannot restore the original resource state using free operations. 
    This weaker but operationally meaningful approach acknowledges that some quantum correlations may survive, yet remain inaccessible under the allowed operational constraints.
    We formalize this intuition by introducing resource-reducing channels: operations that leave free states unchanged while degrading the resource content of all other states. 
    Censorship protocols are built by applying such operations locally to each sender–receiver connection,
    ensuring that quantum messages are reduced to, or approach, classical ones.

    We analyze how users might attempt to circumvent censorship through collaboration.
    This includes multi-party attacks in which the censored users coordinate via joint operations to restore the pre-censorship state (Fig. \ref{fig:censor}).
    Additionally, we study the case of account sharing, where independent parties enable the censored users to perform quantum catalysis \cite{JP99,CC10}. 
    This connects resource censorship to timely topics in quantum information theory including, but not limited to, resource embezzlement \cite{DH03,TS22}, the simulation of quantum operations \cite{KE06,VC02}, and entanglement-assisted quantum communication \cite{BS99,B02,HD08}.
    We illustrate our framework for quantum coherence, where censorship limits communication to incoherent states. 
    We analyze several settings, demonstrating both secure and breakable censorship regimes.

    The article is structured as follows. 
    Section~\ref{sec:QRT} provides a review of QRT.
    There, we introduce resource-reducing channels, the key notion in resource censorship.
    In Sec.~\ref{sec:censor} protocols for censorship are devised.
    We study under which conditions the censorship is secure against attacks by collaborating users.
    Section~\ref{sec:account} considers scenarios that include resource sharing by independent parties.
    In Sec.~\ref{sec:coherence} and Sec.~\ref{sec:asymmetry}, we illustrate the theory for censorship of quantum coherence and quantum asymmetry, respectively. 
    Finally, Sec.~\ref{sec:Fin} is reserved for a summary of the article and concluding remarks.

    \begin{figure}[t]
        \centering
        \begin{tikzpicture}
        \node at (0,0) {\includegraphics[width=0.5\textwidth]{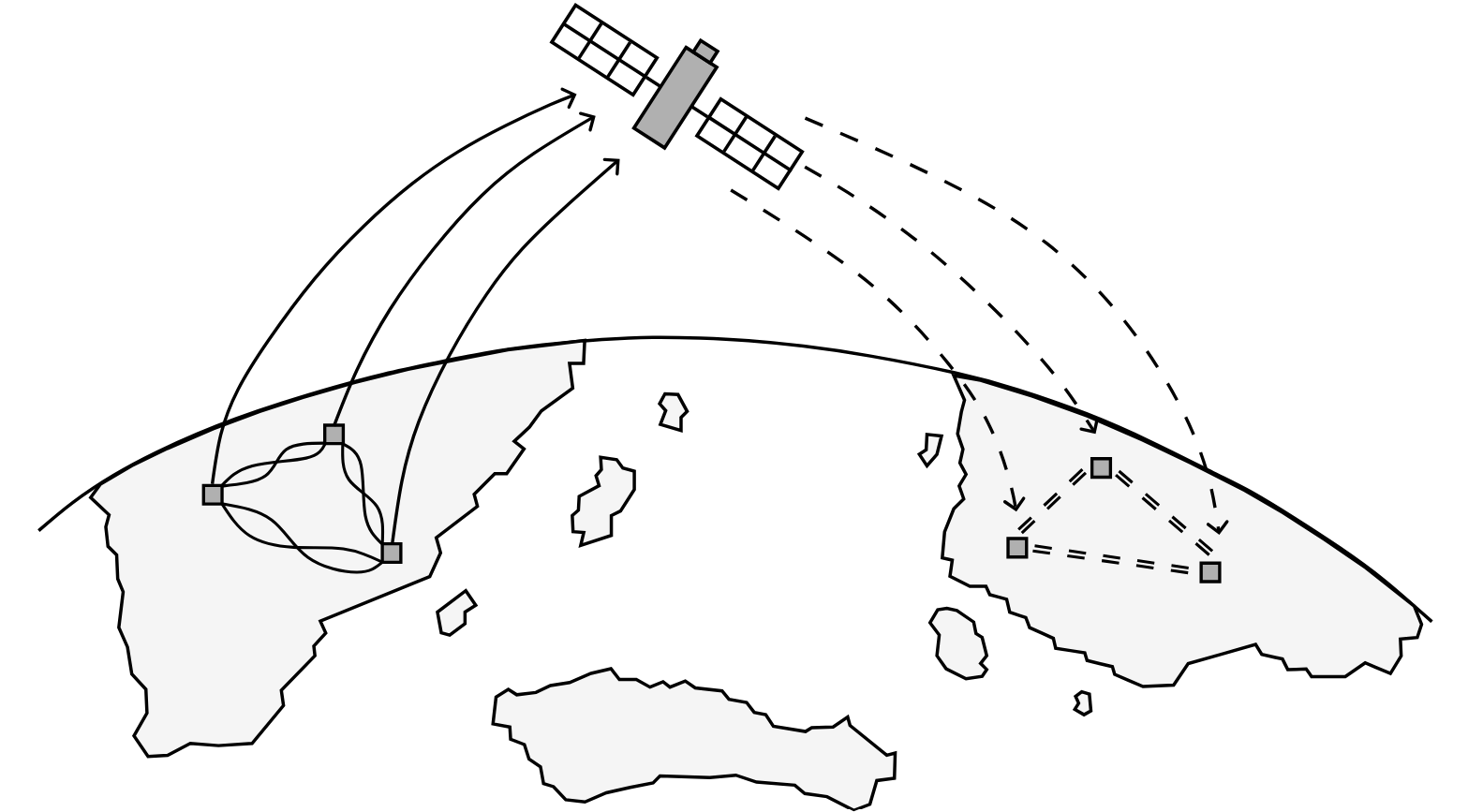}};
        \node[rotate=-28] at (-0.35,1.9) {\scriptsize{$\Omega$}};
        \node at (-3.4,-0.6) {\tiny{$A_1$}};
        \node at (-2.2,-0.1) {\tiny{$A_2$}};
        \node at (-1.9,-0.7) {\tiny{$A_3$}};
        \node at (1.51,-0.9) {\tiny{$B_1$}};
        \node at (2,-0.25) {\tiny{$B_2$}};
        \node at (2.97,-1.22) {\tiny{$B_3$}};
        \end{tikzpicture}
        \caption{\label{fig:censor}
        In a resource censorship, an agency oversees quantum communication in a network by applying a resource-reducing channel $\Omega$ before sending a quantum message. 
        To break the censorship, the senders combine their quantum resources (tangled lines) for improved transmission and the receivers coordinate their classical resources (dashed lines) for state recovery.}
    \end{figure}

    \section{Quantum resource theories}
    \label{sec:QRT}

    When establishing a censorship on quantum information, we first divide resource states, whose distribution is to be prevented, from free states, which propagate through a network unchanged.
    This distinction is made within a quantum resource theory (QRT) \cite{CG19}.
    In a QRT one defines a set of free states $\mathcal{F}(A)$, being a subset of the set of quantum states $\mathcal{D}(A)$.
    The set $\mathcal{D}(A)$ contains positive(-semidefinite), unit-trace operators $\rho$ acting on the (here, finite-dimensional) Hilbert space $\mathcal{H}_A$ of a system $A$.
    If $\rho\notin\mathcal{F}(A)$ is not free, it is said to be a resource.

    Physical operations are mathematically expressed as quantum channels \cite{W18}, i.e., linear maps $\Lambda:\mathcal{D}(A)\to\mathcal{D}(B)$ that take quantum states to quantum states.
    Any QRT comes with a set of free operations $\mathcal{O}(A\to B)$, which do not create a resource:
    \begin{equation}
        \label{eq:res-non-gen}
        \forall \sigma\in\mathcal{F}(A):\quad \Lambda(\sigma)\in\mathcal{F}(B),
    \end{equation}
    for all $\Lambda\in\mathcal{O}(A\to B)$.
    The free operations contain the identity, $\id(\rho)=\rho$, and are closed under composition, i.e., $\Phi\circ \Lambda\in\mathcal{O}(A\to C)$, for all $\Lambda\in\mathcal{O}(A\to B)$ and $\Phi\in\mathcal{O}(B\to C)$ and any choice of systems $A$, $B$, and $C$.
    We write $\mathcal{O}(B)=\mathcal{O}(B\to B)$ for the free operations on $B$, and assume that preparing a free state, 
    \begin{equation}
        \label{eq:prep-free-state}
        \Lambda(\rho)=\Tr(\rho)\sigma~~\text{with}~~\sigma\in\mathcal{F}(B),
    \end{equation}
    is a free operation, i.e., $\Lambda\in\mathcal{O}(B)$.
    
    \subsection{Resource-reducing channels}
    \label{ssec:resource-reducing}

    We can compare two states based on the possibility of converting one into the other using free operations.
    \begin{definition}
        \normalfont
        \label{def:order}
        Let $\rho$ and $\omega$ be quantum states. 
        We write
        \begin{equation}
            \omega\rightarrow \rho,
        \end{equation}
        if there is $\Lambda\in\mathcal{O}$ such that $\rho=\Lambda(\omega)$.
        Otherwise, $\omega\nrightarrow\rho$.
    \end{definition}
    If $\omega$ can be transformed freely into $\rho$ but not vice versa, then $\rho$ is less resourceful than $\omega$.
    For example, any $\sigma\in\mathcal{F}$ is less resourceful than any $\rho\notin\mathcal{F}$, where
    $\sigma\nrightarrow \rho$ by Eq.~\eqref{eq:res-non-gen} and $\rho\rightarrow \sigma$ follows from Eq.~\eqref{eq:prep-free-state}.

    Next, we formally define the notion of resource-reducing channels, a key notion in resource censorship.
    \begin{definition}
        \label{def:RR}
        \normalfont
        A free operation $\Omega\in\mathcal{O}(A\to B)$ obeying
        \begin{align}
            \text{(i)\quad}
            \forall\rho\notin\mathcal F(A):& ~~\Omega(\rho)\nrightarrow\rho,
            \tag{resource-reducing}\\
            \text{(ii)\quad} \forall\sigma\in\mathcal F(A):& ~~\Omega(\sigma)=\sigma,\tag{freeness-preserving}
        \end{align}
        is called resource reducing.
    \end{definition}
    Simply speaking, $\Omega$ reduces quantum resources in a state, and leaves the input unchanged if it was already a free state. 
    For condition (ii) to be well defined, the input and output Hilbert space $\mathcal{H}_A$ and $\mathcal{H}_B$ need to have the same dimension. 
    Note that $\Omega$ is in general not unique; for any free unitary operation $\Phi\in\mathcal{O}(A)$, with $\Phi(\rho)=U\rho U^\dag$ and $U^\dag U = \mathbb{1}_A$, the channel
    \begin{equation}
        \Omega_{\Phi} = \Phi^{-1}\circ \Delta \circ \Phi
    \end{equation}
    is resource reducing as well. See Appendix~\ref{app:resource-reducing} for a proof.

    \subsection{Tensor-product structures}
    \label{ssec:TPS}
     
    Since censorship will be applied to quantum networks, one has to clarify what constitutes a resource on a joint system.
    Throughout, we consider QRTs that admit a tensor-product structure \cite{CG19}.
    Firstly, this means that the independent preparation of free states $\sigma_{A_1},\dots, \sigma_{A_N}$ yields a free state on the joint system, i.e.,
    \begin{equation}
        \label{eq:non-active}
        \sigma_{A_1}\otimes\dots \otimes \sigma_{A_N}\in\mathcal{F}(A_1\dots A_N).
    \end{equation}
    Secondly, discarding subsystems does not create a resource;
    i.e., for $\sigma\in\mathcal{F}(A_1\dots A_N)$, its marginals 
    \begin{equation}
        \mathrm{Tr}_{a}(\sigma)\in \mathcal{F}(A_1\dots A_{a-1}A_{a+1}\dots A_N)
    \end{equation}
    are free, too.
    Thirdly, parallel application of free operations $\Lambda_{A_1},\dots, \Lambda_{A_N}$ constitutes a free operation, i.e., 
    \begin{equation}
        \label{eq:para-free-ops}
        \Lambda_{A_1}\otimes\dots \otimes \Lambda_{A_N}\in\mathcal{O}(A_1\dots A_N\to B_1\dots B_N).
    \end{equation}

    \section{Quantum censorship}
    \label{sec:censor}

    Consider $N$ senders $A_1,\dots, A_N$ who have access to quantum resources.
    In an unregulated network, each sender is connected to one of the receivers $B_1,\dots, B_N$ via a (noiseless) communication channel. 
    However, to prevent the distribution of quantum resources, an agent sits in between each sender-receiver pair. 
    The agent’s goal is to limit the type of states that can be transmitted to the free states of a QRT.
    The agent informs the users that only communication with free states is authorized, constituting a user agreement. 
    To enforce that policy, the agent applies a resource-reducing channel $\Omega$. 
    The information-processing protocol for the censorship is
    \begin{equation*}
    	\Qcircuit @C=1em @R=.7em {
    		\lstick{A_1} & \qw &\gate{\Omega} & \qw & \qw & B_1\\
    		\lstick{\vdots} & &\vdots & & &\vdots\\
    		\lstick{} & & & & &\\
    		\lstick{A_N} & \qw & \gate{\Omega} & \qw & \qw & B_N.\\
    	}
    \end{equation*}

    If the input state $\sigma_{A_1}\otimes\dots\otimes \sigma_{A_N}$ is free, it remains unchanged by $\Omega^{\otimes N}$, because of (ii).
    This allows users to carry out (undisturbed) communication with free states.
    On the other hand, the users might try to collaborate to transmit quantum resources through the network.
    As a strategy to bypass the censorship, the senders prepare a joint resource state $\rho\notin \mathcal{F}(A_1\dots A_N)$.
    After the censorship leaves the receivers with the state $\Omega^{\otimes N}(\rho)$, their task is to restore $\rho$.
    Since the receivers do not have access to quantum resources, they can only apply a joint free operation $\Lambda\in\mathcal{O}(B_1\dots B_N)$. 
    The modified information-processing protocol is
    \begin{equation*}
    	\Qcircuit @C=1em @R=.7em {
    		\lstick{A_1} & \qw &\gate{\Omega} & \qw & \multigate{3}{\Lambda} & \qw & B_1\\
    		\lstick{\vdots} & &\vdots & & & &\vdots\\
    		\lstick{} & & & & & &\\
    		\lstick{A_N} & \qw &\gate{\Omega} & \qw & \ghost{\Lambda} & \qw & B_N.\\
    	}
    \end{equation*}
    
    If it is not possible to restore the original state $\rho$ in this way, then the censorship is termed secure.
    \begin{definition}
        \normalfont
        \label{def:secure}
        Censorship is secure if
        \begin{equation}
            \forall \rho\notin \mathcal{F}(A_1\dots A_N):\quad  \Omega^{\otimes N}(\rho)\nrightarrow\rho.
        \end{equation}
        Otherwise, censorship is said to be breakable.
    \end{definition}
    In other words, censorship is secure if $\Omega^{\otimes N}$ acts resource reducing on multipartite states.
    Conversely, if there exists a resource state $\rho\notin \mathcal{F}(A_1\dots A_N)$ which can be restored after censorship, i.e., $\Omega^{\otimes N}(\rho)\rightarrow\rho$, then the censorship is breakable.

    For example, if each sender-receiver pair acts for themselves the protocol simplifies to
    \begin{equation*}
    	\Qcircuit @C=1em @R=.7em {
    		\lstick{\rho_{A_1}} & \qw &\gate{\Omega} & \qw & \gate{\Lambda_{B_1}} & \qw & B_1\\
    		\lstick{\vdots} & &\vdots & & \vdots & &\vdots\\
    		\lstick{} & & & & & &\\
    		\lstick{\rho_{A_N}} & \qw &\gate{\Omega} & \qw & \gate{\Lambda_{B_N}} & \qw & B_N.\\
    	}
    \end{equation*}
    Here, each sender $A_a$ transmits their state $\rho_{A_a}$ individually, resulting in a product state $\rho_{A_1}\otimes\dots\otimes\rho_{A_N}$.
    Subsequently, the receivers obtain the censored state $\Omega(\rho_{A_1})\otimes\dots\otimes \Omega(\rho_{A_N})$, on which each of them applies a free operation $\Lambda_{B_a}\in\mathcal{O}(B_a)$ separately. 
    The final state
    \begin{equation}
        \Lambda_{B_1}(\Omega(\rho_{A_1}))\otimes \dots \otimes \Lambda_{B_N}(\Omega(\rho_{A_N})),
    \end{equation}
    cannot coincide with the original state, $\rho_{A_1}\otimes\dots\otimes \rho_{A_N}$, because of $\Omega(\rho_{A_a})\nrightarrow \rho_{A_a}$, see (i).
    The users' attempt to break the censorship failed and the protocol is secure. 
    More formally, we have the following theorem.
    \begin{theorem}
        \normalfont
        \label{th:secure-vs-local}
        Let $(\mathcal{F},\mathcal{O})$ be a QRT with  
        \begin{equation*}
            \begin{split}
                \mathcal{F}(A_1\dots A_N)&=\mathcal{F}(A_1)\otimes\dots\otimes\mathcal{F}(A_N),\\
                \mathcal{O}(B_1\dots B_N)&=\mathcal{O}(B_1)\otimes\dots\otimes\mathcal{O}(B_N).
            \end{split}
        \end{equation*}
        Then censorship is secure.\hfill$\blacksquare$
    \end{theorem}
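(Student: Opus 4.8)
The plan is to argue by contradiction: assume the censorship is breakable, so that there exist a resource state $\rho\notin\mathcal{F}(A_1\dots A_N)$ and a recovery operation $\Lambda\in\mathcal{O}(B_1\dots B_N)$ with $\Lambda(\Omega^{\otimes N}(\rho))=\rho$. The first step is to exploit the factorization hypothesis on the free operations: since $\mathcal{O}(B_1\dots B_N)=\mathcal{O}(B_1)\otimes\dots\otimes\mathcal{O}(B_N)$, I may write $\Lambda=\Lambda_{B_1}\otimes\dots\otimes\Lambda_{B_N}$ with each $\Lambda_{B_a}\in\mathcal{O}(B_a)$ (a convex combination of such products is handled the same way, since the local marginals then average over free operations, which remain free). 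Because $\Omega^{\otimes N}=\Omega\otimes\dots\otimes\Omega$ is itself a product, the composed recovery map factorizes as $\Lambda\circ\Omega^{\otimes N}=\bigotimes_{a=1}^{N}(\Lambda_{B_a}\circ\Omega)$, and each factor $\Lambda_{B_a}\circ\Omega\in\mathcal{O}(A_a\to B_a)$ is free by closure under composition.

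Second, I would pass to the single-system level by taking marginals. For a product channel the reduced output on system $a$ depends only on the corresponding input marginal, so tracing out all systems but the $a$-th in the fixed-point equation $\bigotimes_b(\Lambda_{B_b}\circ\Omega)(\rho)=\rho$ gives $(\Lambda_{B_a}\circ\Omega)(\rho_a)=\rho_a$, where $\rho_a=\Tr_{\neq a}(\rho)$. This says precisely that $\Omega(\rho_a)\rightarrow\rho_a$. Invoking the resource-reducing property (i) in its contrapositive form, namely that $\Omega(\tau)\rightarrow\tau$ forces $\tau\in\mathcal{F}(A_a)$, I conclude that every marginal $\rho_a$ is free.

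Finally, I would combine the free marginals with the factorization hypothesis $\mathcal{F}(A_1\dots A_N)=\mathcal{F}(A_1)\otimes\dots\otimes\mathcal{F}(A_N)$ to contradict $\rho\notin\mathcal{F}(A_1\dots A_N)$. Here the freeness-preserving property (ii) is convenient: since each $\rho_a$ is free, $\Omega(\rho_a)=\rho_a$, so the local recovery maps already fix the marginals, $\Lambda_{B_a}(\rho_a)=\rho_a$. The remaining task is to upgrade ``all marginals are free'' to ``$\rho$ is a product of free states,'' which is exactly the content of the structural assumption on $\mathcal{F}$ together with Eq.~\eqref{eq:non-active}.

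I expect the main obstacle to be this last step, namely ruling out a \emph{correlated} resource state whose marginals are all free. The marginal argument constrains only the local reductions, and a general product channel can possess correlated fixed points, so what must still be shown is that the joint fixed point $\rho$ of $\bigotimes_a(\Lambda_{B_a}\circ\Omega)$ with free marginals is forced to equal the product $\rho_1\otimes\dots\otimes\rho_N$, and hence to be free. I would address this either by arguing that on free marginals the local maps act invertibly only on the subspace where $\Omega$ is already trivial, so no genuine correlations can be preserved, or by exhibiting a strict resource monotone that the local resource-reducing action decreases unless the input is already free. Establishing that correlations cannot survive the local reduction is the crux on which security in the fully general (non-product input) case rests.
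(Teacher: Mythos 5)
Your first two steps are sound and actually go beyond what the paper argues: for a product recovery map the marginal of the fixed-point equation correctly yields $(\Lambda_{B_a}\circ\Omega)(\rho_a)=\rho_a$, and the contrapositive of clause (i) in Def.~\ref{def:RR} then forces every marginal $\rho_a$ to be free. The gap you flag at the end, however, is not a technicality that a cleverer argument will close; it is unbridgeable for general correlated inputs. A product channel genuinely has correlated fixed points with free marginals. Concretely, take the coherence-reducing channel of Proposition~\ref{prop:RR-coherence} and the correlated diagonal state $\rho=\tfrac{1}{2}(\ket{00}\bra{00}+\ket{11}\bra{11})$: both marginals are incoherent, $\Omega^{\otimes 2}(\rho)=\rho$ because $\rho$ is diagonal, and $\id\otimes\id\in\mathcal{O}(B_1)\otimes\mathcal{O}(B_2)$ ``recovers'' $\rho$ trivially. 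If the hypothesis $\mathcal{F}(A_1\dots A_N)=\mathcal{F}(A_1)\otimes\dots\otimes\mathcal{F}(A_N)$ is read literally as the set of products of local free states, this $\rho$ is a resource state that survives censorship, so the universal statement of Def.~\ref{def:secure} fails for correlated inputs. Both of your proposed rescue strategies (invertibility only where $\Omega$ acts trivially, or a strict monotone) founder on exactly this example.

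The paper sidesteps the issue by proving something narrower: the discussion preceding Theorem~\ref{th:secure-vs-local} restricts to the scenario where ``each sender--receiver pair acts for themselves,'' i.e., the input is itself a product $\rho_{A_1}\otimes\dots\otimes\rho_{A_N}$. There the factorization of $\mathcal{F}$ guarantees that at least one tensor factor $\rho_{A_a}$ is not free, equality of two product density operators forces $\Lambda_{B_a}(\Omega(\rho_{A_a}))=\rho_{A_a}$ factor by factor, and clause (i) gives the contradiction for that single non-free factor. So either you must add the product-input assumption to your argument---after which your marginal step becomes unnecessary and the proof reduces to the paper's---or you should record that the theorem cannot hold in the stronger, correlated-input form you set out to prove.
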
 

    Theorem~\ref{th:secure-vs-local} assumes the QRT to have a tensor-product structure; see Eq.~\eqref{eq:non-active}.
    In contrast, consider a resource that can be activated, i.e., there is $\sigma\in\mathcal{F}(A)$ such that $\sigma^{\otimes N}\notin \mathcal{F}(A_1\dots A_N)$.
    Despite being a resource, $\sigma^{\otimes N}$ passes the network unharmed, i.e., $\Omega(\sigma)^{\otimes N}=\sigma^{\otimes N}$ via (ii).
    The censorship is breakable.
    Such resources include Bell nonlocality \cite{P12}, not fully-separable states \cite{PV22}, and magic states \cite{BK05,AC12}.
    Simply speaking, any resource that can be distilled by having enough copies of free states cannot be censored in a decentralized network.

    Finally, we note that in a perfect censorship, $\Omega$ leaves no quantum resources between two users.
    That is, it is a resource-destroying channel \cite{LH17}
    \begin{equation}
        \forall \rho\notin\mathcal{F}(A):\quad \Omega(\rho)\in\mathcal{F}(B).
    \end{equation}
    Then the above censorship coincides with the protocols discussed in Ref. \cite{PS24}.
    This idealized variant of censorship may not be achievable in practice, motivating the study of secure censorship based on the (in)convertibility of quantum states, which is the subject of this paper.

    \subsection{Secure censorship and classes of free operations}
    \label{ssec:op-class}

    Theorem \ref{th:secure-vs-local} shows that the users need to cooperate to break the censorship.
    Determining whether the censorship remains secure when users start to team up depends heavily on the joint free operations $\mathcal{O}(B_1\dots B_N)$ available to them.
    The most basic attempt to break censorship is for the receivers to use all (joint) operations that satisfy Eq.~\eqref{eq:res-non-gen}.
    In many concrete situations, the free operations are further restricted.
    For example, if the receivers are spatially well-separated they may only have access to local operations and classical communication \cite{CL14}.
    
    Moreover, according to Def.~\ref{def:secure}, the receivers have to perfectly restore the input state to break the censorship.
    One may relax this condition to obtain a more realistic quantum game \cite{EW00}.
    For instance, one allows for an $\epsilon$ neighborhood $\mathcal{B}_{\epsilon}(\rho)$ around the state $\rho$ and deems the censorship breakable if the final state, $(\Lambda\circ \Omega^{\otimes N})(\rho)$, lies within that neighborhood, for some free operation $\Lambda\in\mathcal{O}(B_1\dots B_N)$. 
    More formally, one writes 
    \begin{equation}
        \Omega^{\otimes N}(\rho)\rightarrow_{\epsilon} \rho~\Leftrightarrow~ \exists\,\omega\in\mathcal{B}_\epsilon(\rho):~\Omega^{\otimes N}(\rho)\rightarrow\omega.
    \end{equation}
    For specific purposes, the set of states $\mathcal{B}_\epsilon(\rho)$ can be defined using a geometric distance, a close-to-unit fidelity, or weak statistical mixtures of $\rho$. 
    
    As an application, one imagines the receivers restoring the original state using only a predefined set $\{\Phi_a\}_a$ of cheap gadgets.
    A sequence $\Lambda=\prod_{a=1}^{n(\epsilon)}\Phi_a$ of these is considered a free operation if the number of gadgets $n(\epsilon)$ has a mild, i.e., poly-logarithmic, dependency on $\epsilon$.
    In this case, censorship is breakable if the pre-censorship state can be restored efficiently.

    Alternatively, we can also relax the condition that breaking the censorship has to be achieved deterministically \cite{CG19}. 
    Instead, the receivers may implement a free operation $\Lambda = \sum_a \Lambda_a,$
    such that 
    \begin{equation}
        \big(\Lambda_a\circ\Omega^{\otimes N}\big)(\rho)=p_a\rho,,
    \end{equation}
    for some outcome $a$ occurring with probability 
    \begin{equation}
        p_a=\Tr\{(\Lambda_a\circ\Omega^{\otimes N})(\rho)\}.
    \end{equation}
    One then restores $\rho$ with probability $p_a$ by performing $\Lambda$ and then measuring $a$ \cite{PSM25}.
    If $\rho$ is restored from the censored state $\Omega^{\otimes N}(\rho)$ by a probabilistic transformation $\Lambda_a$, we write $\Omega^{\otimes N}(\rho)\rightarrow_{p} \rho$.

    Establishing secure censorship against all probabilistic transformations is a very demanding task for the agency (the network provider), because the users need only to transform $\Omega^{\otimes N}(\rho)$ into $\rho$ with some nonzero probability, regardless of how small this probability may be. 
    A more reasonable variant of this protocol will consider the convertibility relation
    \begin{equation}
        \Omega^{\otimes N}(\rho)\rightarrow_{p_{\mathrm{sec}}} \rho~\Leftrightarrow~ \exists\, p\geq p_{\mathrm{sec}}:~\Omega^{\otimes N}(\rho)\rightarrow_{p} \rho.
    \end{equation}
    That is, censorship is breakable, if
    $\Omega^{\otimes N}(\rho)$ can be transformed into $\rho$ with at least probability $p_{\mathrm{sec}}$.
    The security standard set by $p_{\mathrm{sec}}$ may vary on the number of users $N$ trying to break the censorship.
    
    In summary, whether a censorship is secure depends on the capability of the users in a network to convert quantum resources.
    The more we relax the conditions for successful state transformation by free operations, the more difficult it is to restrict the users communication to free states only, i.e., to maintain secure censorship. 
    Which specific class of free operations and convertibility should be considered, will vary drastically on what one agency deems desirable for their network.
    While, our framework (Def. \ref{def:secure}) has the sufficient generality to cover all cases of interest, the next section focuses on censorship against catalytic transformations.

    \section{Can account sharing defy censorship}
    \label{sec:account}

    Consider a quantum network with additional parties $C_1,\dots,C_M$ which are not censored.
    This could be due to these parties having premium access to unlimited quantum communication on the network, or if they can communicate via black-market quantum channels.
    If these independents team up with the censored users, the information-processing protocol for censorship becomes
    \begin{equation*}
    	\Qcircuit @C=1em @R=.7em {
            \lstick{A_1} & \qw & \gate{\Omega} & \qw & \multigate{7}{~\Lambda~} & \qw & \rstick{B_1}\qw\\
    		\lstick{\vdots} & &\vdots & &  &  & \qquad\vdots\\
    		\lstick{} & & & & & \\
    		\lstick{A_N} & \qw & \gate{\Omega} & \qw & \ghost{~\Lambda~} & \qw & \rstick{B_N}\qw\\
    		\lstick{C_1} & \qw & \qw & \qw & \ghost{~\Lambda~} & \qw & \gate{\Tr} \\
    		\lstick{\vdots} & &\vdots & & & & \vdots\\
    		\lstick{} & & & & & \\
    		\lstick{C_M} & \qw & \qw & \qw & \ghost{~\Lambda~} & \qw & \gate{\Tr},\\
    	}
    \end{equation*}
    where the independent parties are traced out at the end of a communication.
    An attempt to break censorship starts with the preparation of a joint resource state
    \begin{equation}
        \label{eq:state-user-indep}
        \rho\otimes\omega_C,~~\text{with}~~\rho\notin\mathcal{F}(A_1\dots A_N)
    \end{equation}
    belonging to the users and $\omega_C$ is prepared by the independents.
    The censorship transforms the input state into $\Omega^{\otimes N}(\rho)\otimes\omega_C$.
    Subsequently, the receivers $B_1,\dots, B_N$ apply a joint free operation $\Lambda\in\mathcal{O}(B_1\dots C_M)$ to restore $\rho$.
    The final state reads
    \begin{equation}
        \label{eq:cat-trafo}
        \Gamma\big( \Omega^{\otimes N}(\rho)\big)=\Tr_{C_1\dots C_M}\big\{\Lambda\big(\Omega^{\otimes N}(\rho)\otimes\omega_C\big)\big\}.
    \end{equation}
    The role of the independents can be understood as providing a catalyst $\omega_C$ for the simulation of the quantum channel $\Gamma$ \cite{LW24}.
    We denote the set of catalytic transformations that can be achieved in this way by $\mathcal{CO}(B_1\dots B_N)$.
    While a free operation in $\mathcal{O}(B_1\dots B_N)$ cannot create a resource [Eq.~\eqref{eq:res-non-gen}], a catalytic transformation $\Gamma\in\mathcal{CO}(B_1\dots B_N)$ might do so.
    If a state $\rho$ can be converted into a state $\gamma$ using catalysis we write $\rho\cra\gamma$.
    If the catalyst $\omega_C$ is used up during the protocol, this is referred to as correlated catalysis \cite{LW24} to distinguish it from strict catalysis \cite{JP99}, in which the catalyst is given back to the independent parties after the protocol.

    In Eq.~\eqref{eq:state-user-indep} we could have allowed for a slightly more general case of catalysis, where the input state is entangled between the users and the independents.
    While this is a valid communication scenario, the resulting catalytic transformation does not need to be completely positive \cite{DL16}.
    This can be witnessed in the time-evolution of initially correlated open quantum systems \cite{P94}.

    \subsection{Censorship against account sharing}
    
    Censorship is secure against account sharing if 
    \begin{equation}
        \label{eq:recall-secure}
        \forall \rho\notin \mathcal{F}(A_1\dots A_N):\quad  \Omega^{\otimes N}(\rho)\nrightarrow_{\mathrm{c}}\rho.
    \end{equation}

    As one would expect, if censorship is secure against account sharing by independent parties, this automatically implies that censorship is secure against the users coordinating with free operations only. 
    Formally, the free operations are a subset of the catalytic transformations,
    \begin{equation}
        \label{eq:cat-inclu}
        \mathcal{O}(B_1\dots B_N)\subseteq \mathcal{CO}(B_1\dots B_N).
    \end{equation}
    For a network without independent parties the inclusion in Eq.~\eqref{eq:cat-inclu} becomes an equality.
    The following discussion then reduces to the argument in Sec. \ref{sec:censor}.

    Catalytic transformations \eqref{eq:cat-trafo} can have the potential to create a resource, i.e., $\Gamma(\sigma)\notin \mathcal{F}(B_1\dots B_N)$, for some free state $\sigma\in\mathcal{F}(A_1\dots A_N)$.
    In this case, censorship is breakable, because if the receivers are able to create a quantum resource, then they can also restore a resource state $\rho\notin\mathcal{F}(A_1\dots A_N)$.
    To see this formally, let $\rho=\Gamma(\sigma)$ be the resource state created by catalysis.
    We concatenate the catalytic transformation $\Gamma$ with the free operation $\Lambda(\rho)=\Tr(\rho)\sigma$ [cf. Eq. \eqref{eq:prep-free-state}].
    This yields
    \begin{equation}
        \Gamma\big(\Lambda\big[\Omega^{\otimes N}(\rho)\big]\big)=\Gamma(\sigma)=\rho.
    \end{equation}
    Thus, $\rho$ has been restored by $\Gamma\circ \Lambda$.
    Thus, $\Omega^{\otimes N}(\rho)\cra\rho$ and censorship is breakable.
    In summary, censorship is secure only if catalytic transformations cannot create a resource.

    \subsection{Swapping quantum states breaks censorship}

    An example that illustrates the above situation well is the swap operation, $\Phi_{\mathrm{S}}(\rho\otimes \sigma)=\sigma\otimes \rho$, which exchanges the quantum states of two parties.
    On its own, $\Phi_{\mathrm{S}}$ is a rather benign operation as its does not create a resource: for $\sigma\in\mathcal{F}(A)$ and $\gamma\in\mathcal{F}(A^\prime)$, we have that
    \begin{equation}
        \label{eq:swap-no-res}
        \Phi_{\mathrm{S}}(\sigma\otimes\gamma)=\gamma\otimes \sigma\in \mathcal{F}(B)\otimes \mathcal{F}(B^\prime),
    \end{equation}
    is again a free state.
    However, the situation changes as soon as uncensored parties join the quantum communication.
    The users coordinate with the independents to realize the catalytic transformation
    \begin{equation}
        \Gamma_{\omega}(\rho)=\Tr_{C_1\dots C_M}\big\{\Phi_{\mathrm{S}}\big(\rho\otimes\omega\big)\big\}=\omega,
    \end{equation}
    where $\Phi_{\mathrm{S}}$ swaps the state $\rho$ of the users with the state $\omega$ of the independents.
    If the catalyst is chosen to be the desired output state, i.e., $\omega=\rho$, then after censorship the receivers obtain 
    \begin{equation}
        \Gamma_\rho\big(\Omega^{\otimes N}(\rho)\big)=\rho.
    \end{equation}
    Clearly, $\Omega^{\otimes N}(\rho)\cra \rho$ was achieved, and censorship is therefore breakable.

    \begin{theorem}
        \normalfont
        \label{th:swap}
        Let $\Phi_{\mathrm{S}}\in\mathcal{O}(B_1 \dots C_M)$ be a free operation.
        Then censorship is breakable by account sharing.\hfill$\blacksquare$
    \end{theorem}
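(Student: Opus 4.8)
The plan is to defeat the security condition \eqref{eq:recall-secure} by exhibiting, for an arbitrary resource state, an explicit catalytic transformation that inverts the censorship. Since the independents are uncensored, the natural strategy is to have them supply as catalyst a fresh copy of precisely the state the senders wish to transmit, and then hand it over to the receivers through the swap. Concretely, I would fix any $\rho\notin\mathcal{F}(A_1\dots A_N)$, let the independents prepare $\omega_C=\rho$, and take the joint free operation to be $\Phi_{\mathrm{S}}$ acting between the receiver block $B_1\dots B_N$ and the independent block $C_1\dots C_M$.

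Next I would check that this assembles into a legitimate catalytic transformation and compute its action. Writing $\Gamma_\rho(\cdot)=\Tr_{C_1\dots C_M}\{\Phi_{\mathrm{S}}(\,\cdot\,\otimes\rho)\}$, membership in $\mathcal{CO}(B_1\dots B_N)$ follows because $\Gamma_\rho$ is the composition of the (free) swap with a partial trace, and discarding subsystems is free by the tensor-product structure \eqref{eq:para-free-ops}. Applying it to the censored state yields $\Gamma_\rho(\Omega^{\otimes N}(\rho))=\Tr_{C_1\dots C_M}\{\rho\otimes\Omega^{\otimes N}(\rho)\}=\rho$, so $\Omega^{\otimes N}(\rho)\cra\rho$ and censorship is breakable by Def.~\ref{def:secure}. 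Note that the argument needs no assumption on $\Omega$ beyond its being resource-reducing; it is the injection of the resource by the uncensored party, not any weakness of $\Omega$, that does the work.

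The step I expect to require the most care is purely dimensional: the swap $\Phi_{\mathrm{S}}$ is defined only when the receiver and independent blocks carry isomorphic Hilbert spaces, so I would either assume $\dim(\mathcal{H}_{B_1\dots B_N})=\dim(\mathcal{H}_{C_1\dots C_M})$ or enlarge $C_1\dots C_M$ with free ancillas until the two blocks match. I would also flag that the catalyst is consumed rather than returned — after the swap the independents retain the degraded $\Omega^{\otimes N}(\rho)$ instead of $\rho$ — so this is an instance of correlated, not strict, catalysis, consistent with the earlier remark. Everything beyond these bookkeeping points reduces to the one-line computation above.
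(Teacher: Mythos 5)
Your argument is correct and is essentially identical to the paper's own proof: the independents prepare the catalyst $\omega_C=\rho$, the swap $\Phi_{\mathrm{S}}$ hands it to the receivers, and tracing out $C_1\dots C_M$ yields $\Gamma_\rho\big(\Omega^{\otimes N}(\rho)\big)=\rho$, hence $\Omega^{\otimes N}(\rho)\cra\rho$. Your added remarks on matching Hilbert-space dimensions and on the catalyst being consumed (correlated rather than strict catalysis) are sensible bookkeeping consistent with the paper's discussion.
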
 

    Swapping or exchanging quantum states between two or multiple parties in a network presents an interesting special case within a QRT.
    On one hand, these permutations do not introduce quantum resources on their own; see Eq. \eqref{eq:swap-no-res}.
    On the other hand, permutations allow for the undisturbed exchange of quantum states, thus providing an uncensored communication channel.
    The swap operation can therefore be understood as a free but not completely-free operation \cite{CV20}.
    Indeed, the swap operation is not always considered to be available due to its nonlocal nature.
    For example, it cannot be realized by local operations and classical communication \cite{HN03}.

    \subsection{Secure censorship is well defined}

    Free operations order quantum states according to their resourcefulness \cite{SV15,CG19}.
    It is a generic fact of such an ordering that for certain states it is not possible to say that one is more resourceful than the other.
    That is, there are states $\rho$ and $\omega$ for which
    \begin{equation}
        \label{eq:non-comp}
        \rho\nrightarrow\omega \quad \text{and} \quad \omega \nrightarrow \rho.
    \end{equation}
    For example, in the QRT of entanglement, certain tripartite states such as the GHZ and W state cannot be converted into each other by (stochastic) local operations and classical communication \cite{BPR00}.  
    Likewise, in the QRT of coherence, there are states for which neither one of them can be transformed into the other by applying random unitaries \cite{N99}.
    
    In resource censorship, the output state of the network $(\Lambda\circ\Omega^{\otimes N})(\rho)$ is always comparable to the input state $\rho$.  
    This follows, because $\Omega^{\otimes N}$ is itself a free operation, which can be seen from $\Omega\in\mathcal{O}(A\to B)$ (Def. \ref{def:RR}) and the fact that the underlying QRT admits a tensor-product structure, see Eq. \eqref{eq:para-free-ops}.
    Thus, $\rho\rightarrow \Omega^{\otimes N}(\rho)$.
    In summary, censorship is either secure or breakable, and ambiguous cases as in Eq. \eqref{eq:non-comp} cannot occur (Fig. \ref{fig:regions}). 

    \begin{figure}[t]
        \centering
        \begin{tikzpicture}
        \node at (0,0) {\includegraphics[width=0.45\textwidth]{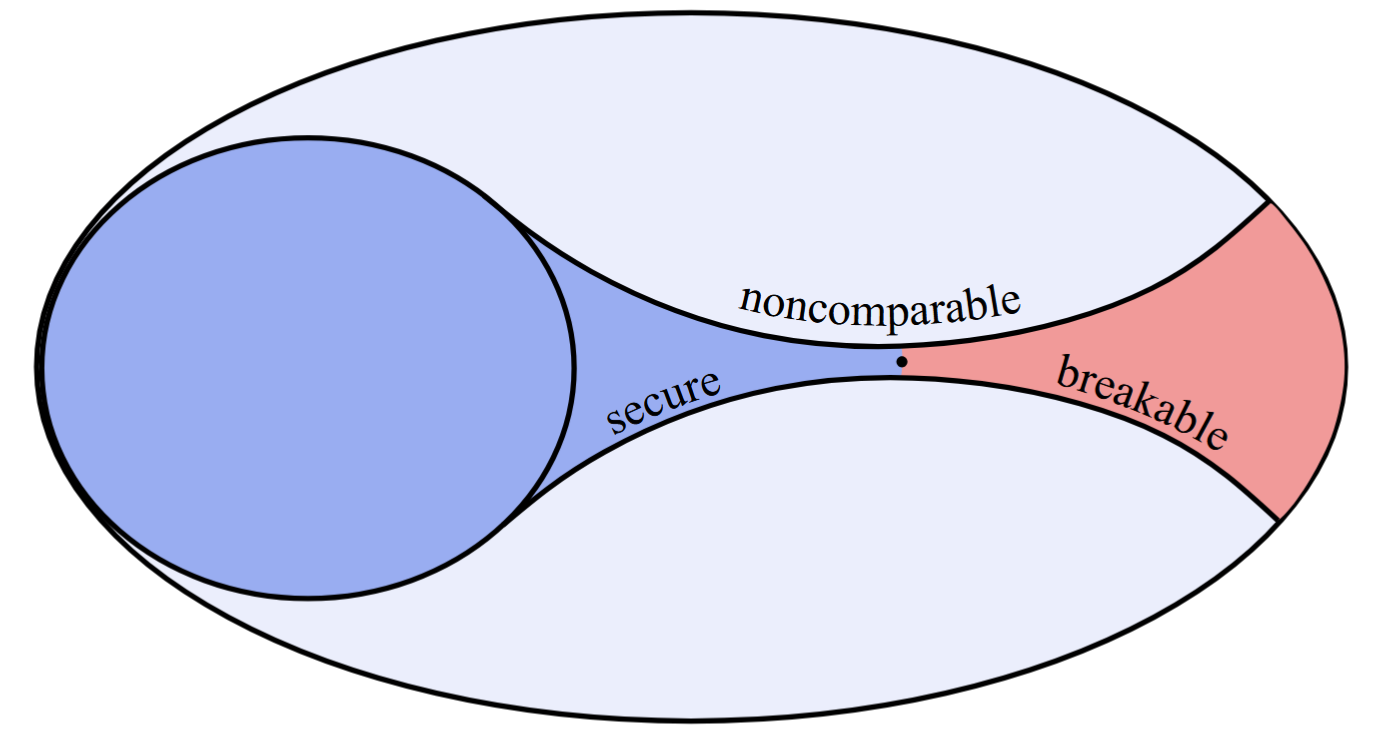}};
        \node at (3,2.1) {$\mathcal{D}(A_1\dots A_N)$};
        \node at (-2.3,0) {$\mathcal{F}(A_1\dots A_N)$};
        \node at (1.25,-0.25) {$\rho$};
        \end{tikzpicture}
        \caption{\label{fig:regions}
        Ordering of multipartite states by free operations.
        If the state after censorship $\Omega^{\otimes N}(\rho)$ lies in the blue region it is less resourceful than $\rho$, i.e., $\Omega^{\otimes N}(\rho)\nrightarrow \rho$. 
        States in the red region allow for a recovery of $\rho$ using free operations, marking a breakable censorship.
        In the light gray region lie states which cannot be freely converted into $\rho$, but neither can $\rho$ be freely converted into them.
        The state after censorship $\Omega^{\otimes N}(\rho)$ cannot lie in these regions, as $\Omega^{\otimes N}$ is itself a free operation.
        }
    \end{figure}

    \subsection{On the existence of censorship protocols}

    So far, we considered a resource-reducing channel to be given without commenting on the assumptions this entails.
    For a resource-reducing channel $\Omega$ to exist, the set $\mathcal{F}$ must be affine.
    That is, for free states $\sigma_a\in\mathcal{F}$, the state $\sigma=\sum_a t_a \sigma_a$, with $t_a\in\mathbb{R}$, is again free.
    Due to (ii) in Def. \ref{def:RR} we have $\Omega(\sigma_a)=\sigma_a$.
    By linearity it follows that
    \begin{equation}
        \Omega(\sigma)=\sum_a t_a \Omega(\sigma_a)=\sigma
    \end{equation}
    must be free as well.
    In other words, resource-reducing channels for nonaffine QRTs are nonlinear \cite{LH17,PM24}.
    QRTs that are affine include coherence and asymmetry for which censorship is devised in Sec.~\ref{sec:coherence} and~\ref{sec:asymmetry}, respectively.
    
    A QRT is convex if the coefficients $t_a\geq 0$ are probabilities.
    While it is clear that any affine QRT is convex, the converse is not true, with the prime example being entanglement \cite{HH09}.
    There, the free states $\mathcal{F}(A)$ of a $K$-partite system $A$ consists of separable states, $\sigma = \sum_a p_a \rho^a_1\otimes\dots\otimes \rho^a_n$, which are statistical mixtures of product states, $p_a\geq 0$.
    The set $\mathcal{F}(A)$ is convex but not affine because any state $\rho\in\mathcal{D}(A)$ can be written as a sum over product states once we allow for negative $t_a\in\mathbb{R}$ \cite{CG19,PM25}.
    Thus, no entanglement-reducing channel $\Omega$ exists, and this raises concerns about the limited applicability of censorship protocols.
    
    To circumvent this limitation, we can divide a convex set $\mathcal{F}$ into affine subsets $\mathcal{F}_k$ such that $\mathcal{F}=\bigcup_k \mathcal{F}_k$. 
    A sender $A$ in a quantum network then has to append a classical key $k$ to their quantum message $\sigma$.
    The key $k$ specifies the subset $\mathcal{F}_k$ in which $\sigma$ lies.
    Based on $k$, the agency applies a resource-reducing channel $\Omega_k$ before allowing the state $\sigma$ to be passed to a receiver $B$. 
    Free states $\sigma\in\mathcal{F}_k$ are left unchanged, $\Omega_k(\sigma)=\sigma$, 
    The channel $\Omega_k$ will degrade the resources of any state $\rho\notin\mathcal{F}$, thus enforcing censorship on the communication. 
    Security of the censorship must be independent of the specific key $k$, because the agency has to treat each user $A$ as an untrusted party, i.e., $\Omega_k(\rho)\nrightarrow \rho$ for all $k$.

    The above construction is always possible as one can choose $\mathcal{F}_k=\{\sigma_k\}$ to contain a single free state $\sigma_k\in\mathcal{F}$.
    In this edge case, we have $\Omega_k(\rho)=\Tr(\rho)\sigma_k$.
    Conditional censorship was devised in Ref.~\cite{PM24} for an ideal censorship, that is, $\Omega_k$ is a resource-destroying channel, but is readily extended to imperfect censorship.

    \section{Quantum coherence}
    \label{sec:coherence}

    Superpositions are the most fundamental resource that differentiates classical and quantum physics.
    In the QRT of coherence \cite{BC14,LM14,SW18}, one quantifies the amount of superpositions in a general mixed state with respect to a fixed orthonormal basis $\{\ket{a}\}_{a=0}^{d-1}$, the incoherent basis. 
    Free (likewise, incoherent) states are diagonal in that basis, 
    \begin{equation}
        \mathcal{F}(A)=\Big\{\sigma\in\mathcal{D}(A)\,\Big|\, \sigma=\sum_{a}p_a\ket{a}\bra{a},\, p_a\geq 0\Big\}.
    \end{equation}
    We consider the free operations of this QRT to consist of the incoherent operations $\mathcal{O}_{\mathrm{IO}}(A\to B)$ \cite{BC14,SV15},
    \begin{equation}
        \label{eq:IO-Kraus}
        \Lambda(\rho)=\sum_a M_a \rho M_a^\dag,\quad M_a=\sum_{b=0}^{d-1}c_{ab}\ket{f(b)}\bra{b},
    \end{equation}
    with $f(b)\in\{0,\dots, d-1\}$. 
    Having specified a QRT, we are now in a position to give a coherence-reducing channel.
    
    \begin{proposition}
        \normalfont
        \label{prop:RR-coherence}
        A coherence-reducing channel is
        \begin{equation*}
            \label{eq:resource-reducing-coherence}
            \Omega(\rho)=\epsilon\rho + (1-\epsilon)\Delta(\rho),
        \end{equation*}
        where $\epsilon\in[0,1)$ and $\Delta(\rho)=\sum_a \ket{a}\bra{a}\rho\ket{a}\bra{a}$.
    \end{proposition}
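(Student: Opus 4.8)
The plan is to verify the three defining properties of a resource-reducing channel (Definition~\ref{def:RR}), specialized to the QRT of coherence. Two of these are immediate. For freeness preservation (ii), note that an incoherent state $\sigma$ is diagonal, so $\Delta(\sigma)=\sigma$ and hence $\Omega(\sigma)=\epsilon\sigma+(1-\epsilon)\sigma=\sigma$. To confirm that $\Omega$ is itself an incoherent operation, I would exhibit a Kraus decomposition of the form \eqref{eq:IO-Kraus}: taking $K_0=\sqrt{\epsilon}\,\mathbb{1}$ together with $K_a=\sqrt{1-\epsilon}\,\ket{a}\bra{a}$ for $a=0,\dots,d-1$ gives $\Omega(\rho)=K_0\rho K_0^\dag+\sum_a K_a\rho K_a^\dag$, where each Kraus operator sends incoherent basis states to multiples of incoherent basis states (matching \eqref{eq:IO-Kraus} with $f=\mathrm{id}$), and $K_0^\dag K_0+\sum_a K_a^\dag K_a=\epsilon\mathbb{1}+(1-\epsilon)\mathbb{1}=\mathbb{1}$ confirms trace preservation.

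The substantive content is condition (i): $\Omega(\rho)\nrightarrow\rho$ for every coherent $\rho$. My strategy is to rule out any freely achieving map by invoking a coherence monotone that strictly decreases under $\Omega$. The natural choice is the $\ell_1$-norm of coherence $C_{\ell_1}(\rho)=\sum_{j\neq k}|\rho_{jk}|$, with $\rho_{jk}=\bra{j}\rho\ket{k}$, which is a coherence monotone, i.e., $C_{\ell_1}(\Lambda(\rho))\leq C_{\ell_1}(\rho)$ for every incoherent $\Lambda$ \cite{BC14}. The key computation is the elementwise action of $\Omega$: since $\Delta(\rho)=\sum_a\rho_{aa}\ket{a}\bra{a}$ leaves diagonal entries intact and annihilates off-diagonal ones, the matrix elements of $\Omega(\rho)$ obey $\Omega(\rho)_{jj}=\rho_{jj}$ and $\Omega(\rho)_{jk}=\epsilon\,\rho_{jk}$ for $j\neq k$. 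Consequently $C_{\ell_1}(\Omega(\rho))=\epsilon\,C_{\ell_1}(\rho)$.

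With these pieces in place, the proof concludes by contradiction. If there were an incoherent $\Lambda$ with $\Lambda(\Omega(\rho))=\rho$, monotonicity together with the scaling identity would force
\begin{equation*}
C_{\ell_1}(\rho)=C_{\ell_1}\big(\Lambda(\Omega(\rho))\big)\leq C_{\ell_1}(\Omega(\rho))=\epsilon\,C_{\ell_1}(\rho).
\end{equation*}
Because any coherent $\rho$ has a nonvanishing off-diagonal entry, $C_{\ell_1}(\rho)>0$, and since $\epsilon<1$ this inequality is violated. Hence no such $\Lambda$ exists and $\Omega(\rho)\nrightarrow\rho$, establishing (i).

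I anticipate the only real subtlety to be the justification that $C_{\ell_1}$ is genuinely non-increasing under the full class of incoherent operations \eqref{eq:IO-Kraus}, which I would cite rather than reprove \cite{BC14}; the remaining steps are bookkeeping, namely confirming the elementwise action of $\Delta$ and the admissibility of the Kraus operators $K_a$. The strict bound $\epsilon<1$ is exactly what converts the monotone's non-strict monotonicity into the strict decrease required for irreversibility, and the case $\epsilon=0$ (where $\Omega=\Delta$ is resource destroying) is covered automatically since then $C_{\ell_1}(\Omega(\rho))=0<C_{\ell_1}(\rho)$.
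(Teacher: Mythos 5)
Your proof is correct, and it shares the paper's overall skeleton (verify membership in $\mathcal{O}_{\mathrm{IO}}$, verify (ii) trivially, prove (i) by contradiction from the fact that $\Omega$ scales every off-diagonal element by $\epsilon$), but the decisive step is closed differently. The paper derives $\epsilon\Lambda(\rho)_{ab}=\rho_{ab}$ for all $a\neq b$ and then simply asserts that an incoherent operation of the form \eqref{eq:IO-Kraus} cannot amplify the off-diagonals by $\epsilon^{-1}$; you instead aggregate the off-diagonals into the $\ell_1$-norm of coherence $C_{\ell_1}$, compute $C_{\ell_1}(\Omega(\rho))=\epsilon\,C_{\ell_1}(\rho)$, and invoke the monotonicity of $C_{\ell_1}$ under incoherent operations \cite{BC14} to reach the contradiction $C_{\ell_1}(\rho)\leq\epsilon\,C_{\ell_1}(\rho)$. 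Your route buys rigor: the paper's unproved assertion is exactly the kind of statement that is most cleanly justified by a coherence monotone, and note that an individual off-diagonal element \emph{can} grow under an incoherent operation, so the impossibility really does hinge on an aggregate quantity like $C_{\ell_1}$ rather than on any single matrix element. What the paper's phrasing buys is brevity and independence from any particular quantifier, but at the cost of leaving the key step implicit. Your explicit Kraus decomposition $K_0=\sqrt{\epsilon}\,\mathbb{1}$, $K_a=\sqrt{1-\epsilon}\,\ket{a}\bra{a}$ is also a slightly more careful justification of $\Omega\in\mathcal{O}_{\mathrm{IO}}$ than the paper's appeal to convex mixing of $\id$ and $\Delta$, though both are valid.
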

    \begin{proof}
        First, $\Omega\in\mathcal{O}_{\mathrm{IO}}(A\to B)$, because it is a mixture of the two incoherent operations $\id,\Delta\in\mathcal{O}_{\mathrm{IO}}(A\to B)$.
        To show that $\Omega$ satisfies (i) in Def.~\ref{def:RR} suppose the converse: there is $\Lambda\in\mathcal{O}_{\mathrm{IO}}(B)$ such that $\rho$ is restored, i.e.,
        \begin{equation}
            (\Lambda\circ\Omega)(\rho)=\epsilon\Lambda(\rho) + (1-\epsilon)(\Lambda\circ\Delta)(\rho)=\rho.
        \end{equation}
        This would imply for the off-diagonal elements
        \begin{equation}
            \label{eq:contra}
            \forall a\neq b:\quad \epsilon\Lambda(\rho)_{ab}=\rho_{ab},
        \end{equation}
        where we used that $\Lambda(\Delta(\rho))_{ab}=0$, for $a\neq b$, which follows from $(\Lambda\circ\Delta)(\rho)\in\mathcal{F}(B)$ being an incoherent state.
        Since $\epsilon<1$, Eq. \eqref{eq:contra} implies that $\Lambda$ increases the off-diagonals by a factor of $\epsilon^{-1}$.
        This is not possible using channels of the form in Eq.~\eqref{eq:IO-Kraus}.
        This leaves us with a contradiction, showing that $\Omega$ satisfies (i).
        Finally, $\Omega$ leaves incoherent states $\sigma\in\mathcal{F}$ unaltered, $\Omega(\sigma)=\sigma$, because of $\Delta(\sigma)=\sigma$.
        Thus (ii) in Def.~\ref{def:RR} holds as well, and $\Omega$ is coherence-reducing.
    \end{proof}

    The coherence-reducing channel $\Omega$ creates a statistical mixture of $\rho$ with the incoherent state $\Delta(\rho)\in\mathcal{F}(B)$.  
    The channel $\Omega$ is encountered when modeling dephasing.
    It emerges from the (Markovian) interaction of an open quantum system with the environment resulting in a monitoring of the incoherent states $\ket{a}$.

    \subsection{Censorship of quantum coherence}

    The QRT of coherence extends naturally to joint systems.
    Here, $\mathcal{F}(A_1\dots A_N)$ consists of statistical mixtures
    \begin{equation}
        \label{eq:multi-incoh}
        \rho=\sum_{a_1,\dots, a_N} p_{a_1\dots a_N} \ket{a_1\dots a_N}\bra{a_1\dots a_N},
    \end{equation}
    of product states $\ket{a_1\dots a_N} = \ket{a_1}\otimes \dots \otimes \ket{a_N}$, with $p_{a_1\dots a_N}$ being a probability distribution.
    The $N$-partite incoherent operations $\Lambda\in\mathcal{O}_{\mathrm{IO}}(B_1\dots B_N)$ are as in Eq. \eqref{eq:IO-Kraus} but with $f(b_1,\dots, b_N)$ taking values in the $N$-fold Cartesian product $\{0,\dots,d-1\}^{\times N}$.
    
    Imposing a censorship on coherence using $\Omega$ means that only incoherent states are preserved during the communication, while states containing superpositions are suppressed.
    Without any independents in the network, the censorship of coherence turns out to be secure.
    \begin{theorem}
        \normalfont
        \label{th:coh-secure}
        Censorship of coherence $(\mathcal{F},\mathcal{O}_{\mathrm{IO}})$ is secure.
    \end{theorem}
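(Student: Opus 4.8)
The plan is to show directly that $\Omega^{\otimes N}$ is resource reducing on the joint system, i.e., that no $N$-partite incoherent operation $\Lambda\in\mathcal{O}_{\mathrm{IO}}(B_1\dots B_N)$ can undo it, which is exactly the condition in Def.~\ref{def:secure}. Writing $\Omega=\epsilon\,\id+(1-\epsilon)\Delta$ and taking the tensor power, each factor acts on a pair of basis indices $(a_k,b_k)$ by leaving diagonal entries ($a_k=b_k$) untouched and damping off-diagonal entries ($a_k\neq b_k$) by $\epsilon$. Since the factors act independently, I would first establish that in the incoherent product basis $\{\ket{a_1\dots a_N}\}$ the channel acts entrywise as
\begin{equation*}
    \big(\Omega^{\otimes N}(\rho)\big)_{\vec a\vec b}=\epsilon^{\,D(\vec a,\vec b)}\,\rho_{\vec a\vec b},
\end{equation*}
where $\vec a=(a_1,\dots,a_N)$, $\vec b=(b_1,\dots,b_N)$, and $D(\vec a,\vec b)=|\{k:a_k\neq b_k\}|$ is the Hamming distance between the multi-indices. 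In particular every diagonal entry is preserved, while every genuinely off-diagonal entry ($\vec a\neq\vec b$, hence $D\geq 1$) is suppressed by at least a factor $\epsilon<1$.

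The key tool is the $\ell_1$-norm of coherence $C_{\ell_1}(\rho)=\sum_{\vec a\neq\vec b}|\rho_{\vec a\vec b}|$, which is a coherence monotone: it does not increase under incoherent operations, $C_{\ell_1}(\Lambda(\rho))\leq C_{\ell_1}(\rho)$ for every $\Lambda\in\mathcal{O}_{\mathrm{IO}}$. I would note that this standard monotonicity \cite{BC14} depends only on the Kraus form of Eq.~\eqref{eq:IO-Kraus} and is insensitive to whether the incoherent basis is that of a single system or the product basis of Eq.~\eqref{eq:multi-incoh}; hence it applies verbatim to the $N$-partite operations $\mathcal{O}_{\mathrm{IO}}(B_1\dots B_N)$. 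Combining this with the entrywise action above, and using $\rho\notin\mathcal{F}(A_1\dots A_N)$ so that at least one off-diagonal entry is nonzero, I would deduce the strict contraction
\begin{equation*}
    C_{\ell_1}\big(\Omega^{\otimes N}(\rho)\big)=\sum_{\vec a\neq\vec b}\epsilon^{\,D(\vec a,\vec b)}|\rho_{\vec a\vec b}|\leq \epsilon\,C_{\ell_1}(\rho)<C_{\ell_1}(\rho).
\end{equation*}

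The proof then finishes by contradiction, mirroring Proposition~\ref{prop:RR-coherence}: if some $\Lambda\in\mathcal{O}_{\mathrm{IO}}(B_1\dots B_N)$ achieved $\Lambda(\Omega^{\otimes N}(\rho))=\rho$, monotonicity and the strict contraction would give $C_{\ell_1}(\rho)\leq C_{\ell_1}(\Omega^{\otimes N}(\rho))<C_{\ell_1}(\rho)$, which is absurd. Hence $\Omega^{\otimes N}(\rho)\nrightarrow\rho$ for every resource state. The main obstacle relative to the single-party Proposition~\ref{prop:RR-coherence} is the richer structure of $\mathcal{O}_{\mathrm{IO}}(B_1\dots B_N)$, whose Kraus operators may permute and merge product basis labels across all $N$ sites via $f(b_1,\dots,b_N)$; the virtue of the monotone-based argument is that it absorbs this structure automatically, so that one never has to track how $\Lambda$ redistributes individual off-diagonal elements.
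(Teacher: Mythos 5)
Your proof is correct, and it is more explicit than the one in the paper. The paper's proof expands $\Omega^{\otimes N}$ as a binomial mixture $\epsilon^N\rho+\dots+(1-\epsilon)^N\Delta^{\otimes N}(\rho)$ and then simply asserts that ``arguments similar to Proposition~\ref{prop:RR-coherence}'' rule out restoration; generalizing that single-party argument is actually not entirely immediate, because the cross terms such as $\id\otimes\Delta^{\otimes(N-1)}$ are not fully dephasing, so one cannot just set all their off-diagonal contributions to zero as in Eq.~\eqref{eq:contra}. Your Hamming-distance formula $(\Omega^{\otimes N}(\rho))_{\vec a\vec b}=\epsilon^{D(\vec a,\vec b)}\rho_{\vec a\vec b}$ resolves exactly this point by resumming the binomial expansion entrywise, and replacing the paper's informal ``amplifying off-diagonals is not possible for channels of the form \eqref{eq:IO-Kraus}'' with the explicit monotonicity of the $\ell_1$-norm of coherence is precisely the rigorous content behind that claim (and, as you note, the monotonicity proof of Baumgratz \emph{et al.} is basis-agnostic, so it applies verbatim in the product basis of Eq.~\eqref{eq:multi-incoh}). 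The underlying idea---strict damping of every off-diagonal entry combined with the impossibility of coherence amplification under incoherent operations---is the same, but your version buys a quantitative strict contraction $C_{\ell_1}(\Omega^{\otimes N}(\rho))\leq\epsilon\,C_{\ell_1}(\rho)$ and closes the gap the paper leaves to the reader. One cosmetic remark: the paper's proof contains a typo ($\Lambda(\Omega^{\otimes N}(\rho))\neq\rho$ where $=\rho$ is meant); your statement of the contradiction is the correct one.
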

    \begin{proof}
        For $\rho\notin\mathcal{F}(A_1\dots A_N)$, the censorship yields
        \begin{equation}
            \label{eq:RR-prod}
            \begin{split}
                \Omega^{\otimes N}(\rho)=\epsilon^N \rho + \dots + (1-\epsilon)^N\Delta^{\otimes N}(\rho).
            \end{split}
        \end{equation}
        Arguments similar to the ones in Proposition \ref{prop:RR-coherence} show that there is no $\Lambda\in\mathcal{O}(B_1\dots B_N)$ such that $\Lambda\big(\Omega^{\otimes N}(\rho)\big)=\rho$.
        Thus, $\Omega^{\otimes N}(\rho)\nrightarrow \rho$ and censorship is secure.
    \end{proof}
    Secure censorship is a positive result for any agency trying to reserve quantum communication for specific costumers, while restricting the general users to classical communication and shared randomness; see Eq. \eqref{eq:multi-incoh}.

    Note that, the notion of secure censorship is independent of the specific value $\epsilon$ in Proposition \ref{prop:RR-coherence}.    
    Nevertheless, in real-world applications, the agency may want to achieve a sizable reduction of coherence before a state reaches a receiver.
    In an ideal censorship, we have $\epsilon=0$,
    so that no coherences are left in the output state, i.e.,
    \begin{equation}
        \Omega(\rho)=\Delta(\rho)\in\mathcal{F}(B),
    \end{equation}
    is an incoherent state.
    Only in this case does $\Omega$ also break the entanglement to any adjacent user \cite{HS03}.
    In practice, $\epsilon=0$, may not be achieved perfectly.

    \subsection{Revival of quantum coherence by account sharing}

    The situation becomes more intricate when there are independent parties $C_1,\dots,C_M$ in the network, which aid the transmission of quantum coherence by the users.
    In particular, the set $\mathcal{O}_{\mathrm{IO}}(B_1\dots C_M)$ contains the swap operation $\Phi_{\mathrm{S}}$, because it has the form \eqref{eq:IO-Kraus}, with $M=\sum_{a,b}\ket{ab}\bra{ba}$.
    The swapping of a users' state with an independent party breaks the censorship. 
    \begin{theorem}
        \label{th:coh-break}
        \normalfont
        The censorship of coherence $(\mathcal{F},\mathcal{CO}_{\mathrm{IO}})$ is breakable by account sharing.
    \end{theorem}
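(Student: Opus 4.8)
The plan is to reduce the claim to Theorem~\ref{th:swap}, which already establishes that censorship is breakable by account sharing whenever the swap operation $\Phi_{\mathrm{S}}$ is a free operation on the joint system $B_1\dots C_M$. Accordingly, the only thing I would need to verify here is that, for the concrete QRT of coherence with free operations $\mathcal{O}_{\mathrm{IO}}$, the swap indeed belongs to $\mathcal{O}_{\mathrm{IO}}(B_1\dots C_M)$. Once this membership is confirmed, Theorem~\ref{th:swap} applies verbatim and the result follows immediately.

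To show $\Phi_{\mathrm{S}}\in\mathcal{O}_{\mathrm{IO}}(B_1\dots C_M)$, I would check that its single Kraus operator $M=\sum_{a,b}\ket{ab}\bra{ba}$ has the incoherent form of Eq.~\eqref{eq:IO-Kraus}. Expressing the multipartite incoherent basis as the tensor-product basis $\ket{a_1\dots a_N}$, the swap merely permutes these basis vectors by exchanging the registers of a user and an independent party. Since $\Phi_{\mathrm{S}}$ sends each incoherent basis state to another incoherent basis state with unit coefficient, its Kraus operator is of the form $M_a=\sum_b c_{ab}\ket{f(b)}\bra{b}$ with $f$ a bijection and $c_{ab}\in\{0,1\}$. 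In particular, $\Phi_{\mathrm{S}}$ maps incoherent states to incoherent states, confirming that it is an incoherent operation.

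With the swap identified as a free operation, I would conclude by invoking Theorem~\ref{th:swap}: the users coordinate with the independents to implement the catalytic transformation $\Gamma_\omega(\rho)=\Tr_{C_1\dots C_M}\{\Phi_{\mathrm{S}}(\rho\otimes\omega)\}=\omega$, and choosing the catalyst $\omega=\rho$ yields $\Gamma_\rho(\Omega^{\otimes N}(\rho))=\rho$, so that $\Omega^{\otimes N}(\rho)\cra\rho$ for every resource state $\rho$. There is no substantial obstacle in this argument; the only point requiring a moment of care is the observation that a permutation of the incoherent basis qualifies as an incoherent operation, which is immediate once one recognizes that such permutations preserve diagonality in the incoherent basis. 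The theorem is thus essentially a statement that the general swap mechanism of Theorem~\ref{th:swap} is concretely realizable within the coherence QRT.
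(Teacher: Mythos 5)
Your proposal is correct and follows the paper's own route exactly: the paper likewise observes that $\Phi_{\mathrm{S}}$ has the incoherent Kraus form of Eq.~\eqref{eq:IO-Kraus} with $M=\sum_{a,b}\ket{ab}\bra{ba}$ and then simply invokes Theorem~\ref{th:swap}. Your additional verification that the swap permutes the incoherent product basis is a sound (if slightly more explicit) justification of the same membership claim.
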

    \begin{proof}
        The result follows from Theorem \ref{th:swap}.
    \end{proof}

    Despite the swap channel being an incoherent operation, it is a very powerful one.
    Exchanging arbitrary quantum states across long distances means the receivers and the independents are effectively connected by an uncensored communication channel.
    More often we may be concerned with a situation in which receivers use classical communication to coordinate local operations with each other and the independents.
    That is, they can apply an incoherent separable operation \cite{SRB17},
    \begin{equation}
        \label{eq:sep-map}
        \Lambda_{B_1\dots C_M}=\sum_{b} \Lambda_{B_1}^b\otimes\dots \otimes \Lambda_{B_N}^b \otimes \Lambda_{C_1}^b\otimes \dots\otimes \Lambda_{C_M}^b,
    \end{equation}
    with $\Lambda_{B_a}^b$ being a (possibly probabilistic) operation of the form in Eq.~\eqref{eq:IO-Kraus}.
    We denote the set of operations in Eq.~\eqref{eq:sep-map} by
    $\mathcal{O}_{\mathrm{SIO}}(B_1\dots C_M)$.
    This set is strictly smaller than the set of incoherent operations $\mathcal{O}_{\mathrm{IO}}(B_1\dots B_M)$.
    In particular, it does not contain the swap operation,
    $\Phi_{\mathrm{S}}\notin\mathcal{O}_{\mathrm{SIO}}(B_1\dots B_M)$.
    For these more restricted operations the censorship is again secure.
    \begin{theorem}
        \normalfont
        \label{th:coh-secure-2}
        The censorship of coherence $(\mathcal{F},\mathcal{CO}_{\mathrm{SIO}})$ is secure against account sharing.
    \end{theorem}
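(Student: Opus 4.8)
The plan is to exploit the single structural feature that distinguishes $\mathcal{O}_{\mathrm{SIO}}$ from the full incoherent operations, namely that a separable operation contains no genuinely nonlocal gate such as the swap. My strategy is to show that, once the catalyst enters as a product and the restoring operation is separable, tracing out the independents collapses the entire catalytic transformation back to an ordinary catalyst-free separable incoherent operation acting on the users' systems alone. Security then follows immediately from Theorem~\ref{th:coh-secure}, which already rules out restoration of $\rho$ by any incoherent operation in a network without independents.

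Concretely, I would start from the catalytic transformation of Eq.~\eqref{eq:cat-trafo} with a separable restoring map $\Lambda=\sum_b\Lambda_{B_1}^b\otimes\dots\otimes\Lambda_{B_N}^b\otimes\Lambda_{C_1}^b\otimes\dots\otimes\Lambda_{C_M}^b\in\mathcal{O}_{\mathrm{SIO}}(B_1\dots C_M)$ and the product input $\Omega^{\otimes N}(\rho)\otimes\omega_C$. Because the input factorizes across the $B$ and $C$ registers and each branch $b$ acts as a tensor product across those registers, branch $b$ sends $\Omega^{\otimes N}(\rho)\otimes\omega_C$ to $[(\bigotimes_a\Lambda_{B_a}^b)(\Omega^{\otimes N}(\rho))]\otimes[(\bigotimes_m\Lambda_{C_m}^b)(\omega_C)]$. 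Tracing out $C_1\dots C_M$ therefore replaces the catalyst part of branch $b$ by the nonnegative scalar $q_b=\Tr\{(\bigotimes_m\Lambda_{C_m}^b)(\omega_C)\}$, leaving $\Gamma(\Omega^{\otimes N}(\rho))=\sum_b q_b\,(\bigotimes_a\Lambda_{B_a}^b)(\Omega^{\otimes N}(\rho))$.

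It then remains to recognize the effective map $\Gamma=\sum_b q_b\bigotimes_a\Lambda_{B_a}^b$ as a legitimate catalyst-free separable incoherent operation. Absorbing each weight $q_b$ into one tensor factor (say $q_b\Lambda_{B_1}^b$, which remains an incoherent Kraus branch since scaling a map by $q_b\geq 0$ merely rescales its Kraus operators) puts $\Gamma$ into the form of Eq.~\eqref{eq:sep-map} on $B_1\dots B_N$, while the normalization of $\omega_C$ together with trace preservation of $\Lambda$ guarantees $\Tr\{\Gamma(X)\}=\Tr(X)$. Hence $\Gamma\in\mathcal{O}_{\mathrm{SIO}}(B_1\dots B_N)\subseteq\mathcal{O}_{\mathrm{IO}}(B_1\dots B_N)$, so $\Omega^{\otimes N}(\rho)\cra\rho$ would entail $\Omega^{\otimes N}(\rho)\rightarrow\rho$ under a plain incoherent operation, contradicting Theorem~\ref{th:coh-secure}. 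Therefore $\Omega^{\otimes N}(\rho)\ncra\rho$ for every $\rho\notin\mathcal{F}(A_1\dots A_N)$.

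The step I expect to carry the weight of the argument is the factorization in the second paragraph: it is exactly here that separability is indispensable. For a general nonseparable incoherent operation such as the swap of Theorem~\ref{th:swap}, the branch would not split across the $B$ and $C$ registers, the partial trace over $C$ would not reduce to a mere scalar weight, and coherence could be imported from the catalyst---precisely the loophole that makes account sharing dangerous. A minor point to verify carefully is that the classical correlations carried by the shared branch index $b$ survive the partial trace only as a classical distribution $\{q_b\}$ over local incoherent operations, conferring no nonlocal advantage; this is what keeps $\Gamma$ inside $\mathcal{O}_{\mathrm{SIO}}$ rather than enlarging it.
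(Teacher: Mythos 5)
Your proposal is correct and follows essentially the same route as the paper's own proof: factorize the separable branches across the $B$ and $C$ registers, reduce the traced-out catalyst to nonnegative scalar weights $\lambda_\omega^b$, recognize the resulting effective map as an element of $\mathcal{O}_{\mathrm{SIO}}(B_1\dots B_N)$, and invoke Theorem~\ref{th:coh-secure}. Your additional checks (absorbing the weights into one Kraus branch and verifying trace preservation) merely spell out what the paper dismisses as ``not hard to see.''
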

    \begin{proof}
        For any resource state $\rho\otimes\omega$ between the users and the independents and every free operation $\Lambda\in\mathcal{O}_{\mathrm{SIO}}(B_1\dots C_M)$, the (reduced) state of $B_1,\dots, B_N$ is 
        \begin{equation}
            \begin{split}
                &\Tr_{C_{1}\dots C_M}\big\{\Lambda\big(\Omega^{\otimes N}(\rho)\otimes \omega\big)\big\}\\ 
                &=\sum_{b}\big[\big(\Lambda_{B_1}^b\otimes \dots \otimes\Lambda_{B_N}^b\big)\circ \Omega^{\otimes N}\big](\rho)\\
                &\quad \times\Tr\big\{ \big(\Lambda_{C_{1}}^b\otimes \dots \otimes\Lambda_{C_M}^b\big)(\omega)\big\}=\Lambda^\prime \big[\Omega^{\otimes N}(\rho)\big],
            \end{split}
        \end{equation}
        where we derived the catalytic transformation
        \begin{equation}
            \Lambda^\prime=\sum_{b}\lambda_{\omega}^b \Lambda_{B_1}^b\otimes\dots \otimes \Lambda_{B_N}^b,
        \end{equation}
        having nonnegative coefficients 
        \begin{equation}
            \lambda_{\omega}^b=\Tr\big\{ \big(\Lambda_{C_{1}}^b\otimes \dots \otimes\Lambda_{C_M}^b\big)(\omega)\big\}\geq 0.
        \end{equation}
        It is not hard to see that $\Lambda^\prime\in\mathcal{O}_{\mathrm{SIO}}(B_1\dots B_N)$ is a separable operation.
        Theorem \ref{th:coh-secure} implies that $\rho$ cannot be restored using incoherent operations, and thus it cannot be restored using $\Lambda^\prime\in\mathcal{O}_{\mathrm{SIO}}(B_1\dots B_N)$.
        Thus, $\Omega^{\otimes N}(\rho)\nrightarrow_{\mathrm{c}} \rho$, for any $\rho\notin\mathcal{F}(A_1\dots A_N)$.
    \end{proof}
    
    Theorem \ref{th:coh-secure-2} establishes that the independents providing a catalyst does not suffice to break the censorship, because the receivers $B_1,\dots, B_N$ lack the ability to transfer coherences from the catalyst.
    Mathematically, we have $\mathcal{CO}_{\mathrm{SIO}}=\mathcal{O}_{\mathrm{SIO}}$, that is, catalysis does not enlarge the set of free operations.
    In particular, it does not allow to simulate a non-free operation.

    When removing classical communication as well, the users and independents are only connected via local incoherent operations and shared randomness.
    This captures the restrictions found in Bell-like experiments and nonlocal games \cite{B12}.
    Since such situations are a special case of separable incoherent operations Theorem \ref{th:coh-secure} ensures that in these games no transfer of coherence is possible between the players.

    \section{Quantum asymmetry}
    \label{sec:asymmetry}

    Certain types of quantum information are, without a common reference, of no use to the communicating parties in a network. 
    For example, if two users want to communicate via the exchange of a spin-1/2 particle, they may first agree on a spatial orientation so as to measure spin components relative to a Cartesian frame \cite{BRS03}.
    Likewise, if a sender's laser acts as a phase reference, but is not phase locked to the one of a receiver, then this results in a loss of coherence during transmission \cite{M97,BR07}.
    Mathematically, a reference frame is given by an element $g$ of a (here, finite) group $\mathcal{G}$.
    Each frame $g$ is represented by a unitary matrix $U^g$ such that the state $\rho$ of a sender $A$ is described by $U^g\rho U^{g\dag}$ in the frame of a receiver $B$.
    If $B$ lacks knowledge of $g$, then $B$'s state is symmetric under the action of any $U^g$, i.e., $U^g\sigma U^{g\dag} =\sigma$ for all $g\in\mathcal{G}$. 
    For illustration, rotating a spherically symmetric object toward an agreed-upon direction has no significance.
    
    In the QRT of asymmetry \cite{JB03,VAW08}, the lack of a shared reference defines the free (likewise, symmetric) states, 
    \begin{equation}
        \mathcal{F}(A) = \big\{\sigma\in\mathcal{D}(A)\,\big|\,\forall g:\, U^g\sigma U^{g\dag} = \sigma\big\},
    \end{equation}
    which are obtained by averaging over all possible $g$, i.e.,
    \begin{equation}
        \label{eq:RD-asym}
        \Delta(\rho)=\frac{1}{|\mathcal{G}|}\sum_{g\in\mathcal{G}} U^g \rho U^{g\dag},
    \end{equation}
    known as the twirling channel.
    It holds that $\Delta(\sigma)=\sigma$ for all $\sigma\in\mathcal{F}(A)$.
    The free operations of this QRT consist of the covariant operations $\mathcal{O}_{\mathrm{Cov}}(A\to B)$ \cite{BR07,CG19},
    \begin{equation}
        \forall g:~\Lambda(U^g \rho U^{g\dag}) = U^g \Lambda(\rho) U^{g\dag}.
    \end{equation}
    Having specified a QRT, we can give an asymmetry-reducing channel.

    \begin{proposition}
        \normalfont
        \label{prop:RR-asymmetry}
        An asymmetry-reducing channel is
        \begin{equation*}
            \label{eq:resource-reducing-asymmetry}
            \Omega(\rho)=\sum_{g} p_g U^g \rho U^{g\dag},
        \end{equation*}
        where $0<p_g<1$ are strictly positive probabilities.
    \end{proposition}
    \begin{proof}
        First, $\Omega\in\mathcal{O}_{\mathrm{Cov}}(A\to B)$ is covariant due to $\Omega(U^g\rho U^{g\dag})=U^g \Omega(\rho)U^{g\dag}$ for all $g$.
        To show (i) in Def.~\ref{def:RR} we first note that $\Omega$ increases the von Neumann entropy, $S(\rho)=-\Tr(\rho\log \rho)$, of any asymmetric state $\rho\notin\mathcal{F}$, viz.
        \begin{equation}
            \label{eq:ent-ineq}
            \begin{split}
                S(\Omega(\rho)) & > \sum_g p_g S(U^g\rho U^{g\dag}),\\
                & = \sum_g p_g S(\rho) = S(\rho),\\
            \end{split}
        \end{equation}
        where the second line holds due to $S(U^g\rho U^{g\dag})=S(\rho)$ and $\sum_g p_g = 1$.
        The first line follows from the entropy being concave, $S(\sum_g p_g U^g\rho U^{g\dag})\geq \sum_g p_g S(U^g\rho U^{g\dag})$.
        The inequality~\eqref{eq:ent-ineq} is strict as equality holds only if $U^g\rho U^{g\dag}=\rho$ for all $g$, which is not possible because $\rho\notin\mathcal{F}$ is, by assumption, asymmetric.
        Introducing the (entropy of) frameness, $f(\rho)=S(\Delta(\rho)) - S(\rho)$ \cite{GMS09} and using Eq.~\eqref{eq:ent-ineq} we find $f(\Omega(\rho))<f(\rho)$.
        Since a covariant operation $\Lambda\in \mathcal{O}_{\mathrm{Cov}}(B)$ cannot increase the frameness \cite{LW24},
        \begin{equation}
            f((\Lambda\circ\Omega)(\rho)) \leq f(\Omega(\rho)) < f(\rho),
        \end{equation}
        it follows that $(\Lambda\circ\Omega)(\rho)\neq\rho$, which amounts to $\Omega(\rho)\nrightarrow \rho$ showing (i).
        Finally, $\Omega$ leaves free states $\sigma\in\mathcal{F}$ unaltered,
        \begin{equation}
            \Omega(\sigma) = \sum_g p_g U^g \sigma U^{g\dag} = \sigma,
        \end{equation}
        which follows from $U^g\sigma U^{g\dag}=\sigma $ for all $g$.
        Thus, (ii) in Def.~\ref{def:RR} holds as well, and $\Omega$ is asymmetry reducing.
    \end{proof}
    When the agency attempts to apply the twirling channel $\Delta$ but fails to achieve a perfect averaging, $p_g\neq \tfrac{1}{|\mathcal{G}|}$, they instead realize the channel $\Omega$.
    While $\Delta$ is the unique asymmetry-destroying channel~\cite{G17}, any strictly positive probability distribution, $0<p_g<1$, defines an asymmetry-reducing channel $\Omega$.

    \subsection{Censorship of quantum asymmetry}

    The QRT of asymmetry extends naturally to joint systems. 
    Here, $\mathcal{F}(A_1\dots A_N)$ consists of symmetric states
    \begin{equation}
        \forall g:~(U^g_{1}\otimes \dots \otimes U^g_{N})\sigma (U^g_{1}\otimes \dots \otimes U^g_{N})^\dag = \sigma,
    \end{equation}
    where each subsystem lacks information of the local reference frame $g$.
    The $N$-partite covariant operations $\Lambda\in\mathcal{O}_{\mathrm{Cov}}(B_1\dots B_N)$ obey
    \begin{equation}
        \label{eq:multi-cov-op}
        \begin{split}
            &\Lambda\big((U^g_{1}\otimes \dots \otimes U^g_{N})\rho(U^g_{1}\otimes \dots \otimes U^g_{N})^\dag\big)\\
            &=(U^g_{1}\otimes \dots \otimes U^g_{N})\Lambda(\rho)(U^g_{1}\otimes \dots \otimes U^g_{N})^\dag\\
        \end{split}
    \end{equation}
    for any frame $g$.

    The users of a network might try to establish a reference frame $g^\prime$ for their quantum communication before transmission, e.g., using classical communication. 
    The asymmetry-reducing channel $\Omega$ will nullify their established frame by randomly applying unitary rotations $U^g$ with probability $p_g$.
    Thus, the user's strategy fails and so does any other attempt using covariant operations~\eqref{eq:multi-cov-op} as is embodied in the following theorem.
    \begin{theorem}
        \normalfont
        \label{th:coh-secure}
        Censorship of asymmetry is secure.
    \end{theorem}
    \begin{proof}
        For $\rho\notin\mathcal{F}(A_1\dots A_N)$, the state after censorship is $\Omega^{\otimes N}(\rho)$.
        Strict concavity of the entropy implies $S(\Omega^{\otimes N}(\rho))>S(\rho)$.
        Thus, $f(\Omega^{\otimes N}(\rho))<f(\rho)$ and due to $f(\Lambda(\rho))\leq f(\rho)$ we have $f((\Lambda\circ\Omega^{\otimes N})(\rho))<f(\rho)$ showing that $(\Lambda\circ\Omega^{\otimes N})(\rho)\neq \rho$ for all $\Lambda\in\mathcal{O}_{\mathrm{Cov}}(B_1\dots B_N)$.
        Thus, $\Omega^{\otimes N}(\rho)\nrightarrow \rho$ and censorship is secure.
    \end{proof}
    
    \subsection{Revival of quantum asymmetry by account sharing}

    The situation changes when independents parties $C_1,\dots,C_M$ aid the transmission of asymmetric states.
    They provide a catalyst for the simulation of non-covariant channels. 
    \begin{theorem}
        \normalfont
        \label{th:asym-secure-2}
        The censorship of asymmetry $(\mathcal{F},\mathcal{CO}_{\mathrm{Cov}})$ is breakable by account sharing.
    \end{theorem}
    \begin{proof}
        For any resource state $\rho\notin\mathcal{F}(A_1\dots A_N)$ of the senders, the independents supply a pure state $\ket{\phi}$ acting as an ideal reference frame, i.e., $\braket{\phi|U^g_C|\phi}=0$ for all $U_C^g=U_{C_1}^g\otimes \dots \otimes U_{C_M}^g \neq \mathbb{1}^{\otimes M}$.  
        After censorship $\Omega^{\otimes N}$, the receivers and independents apply the joint covariant operation \cite{MS13}
        \begin{equation*}
            \begin{split}
                &\Lambda \big(\Omega^{\otimes N}(\rho) \otimes \ket{\phi}\bra{\phi}\big) \\
                &= \sum_{g} \Tr\big[(\mathbb{1}^{\otimes N}\otimes U_C^g\ket{\phi}\bra{\phi}U_C^{g\dag})(\Omega^{\otimes N}(\rho) \otimes \ket{\phi}\bra{\phi})\big]\\
                &\quad \times U_B^g\rho U_B^{g\dag}\otimes U_C^g\ket{\phi}\bra{\phi}U_C^{g\dag},
            \end{split}
        \end{equation*}
        where $U_B^g=U_{B_1}^g\otimes \dots \otimes U_{B_N}^g$.
        Then, 
        \begin{equation}
            \Tr_{C_1\dots C_M}\big\{\Lambda \big(\Omega^{\otimes N}(\rho) \otimes \ket{\phi}\bra{\phi}\big)\big\}=\rho.
        \end{equation}
        Thus, $\Omega^{\otimes N}(\rho)\rightarrow_{\mathrm{c}} \rho$, for any $\rho\notin\mathcal{F}(A_1\dots A_N)$.
    \end{proof}
    Theorem \ref{th:asym-secure-2} shows that not only can censorship be overcome but any asymmetric state $\rho\notin\mathcal{F}(A_1\dots A_N)$ can be restored.
    Notably, the state $\ket{\phi}$ provided by the independents was not consumed during the catalysis.

    The security of censorship depends on the group $\mathcal{G}$.  
    Here, we considered a finite group $\mathcal{G}$ for which catalysis is always possible \cite{MS13}, leading to Theorem~\ref{th:asym-secure-2}.
    The analysis can be extended to continuous (connected Lie) groups for which perfect catalysis with pure states as in Theorem \ref{th:asym-secure-2} is never possible \cite{MS13}.
    Then, the state of the independents must be consumed during catalysis \cite{DHF21,LWW23} for the users to have a chance at breaking the censorship.
    
    \section{Conclusion}
    \label{sec:Fin}

    In this work, we introduced novel protocols for resource censorship, aimed at suppressing the transmission of quantum resources in a public-domain network, while allowing for unrestricted communication with free states. 
    Unlike approaches that demand the complete erasure of quantum resources, in our framework censorship is deemed secure if the receiver cannot recover the lost quantum correlations using only free operations.

    Censorship relies on a resource-reducing channel, applied locally to each communication line in a network. 
    The channel acts as an (imperfect) filter, degrading quantum features while leaving classical ones unaffected. 
    From this perspective, secure censorship can be interpreted as the implementation of completely filtering maps—--processes that filter quantum resources even in the presence of entangled ancillary systems.

    We analyzed conditions under which censorship protocols remain secure, especially in the presence of independent or privileged parties who may assist censored users in bypassing the imposed restrictions. 
    The theory was applied to the example of quantum coherence, where censorship corresponds to the users being only permitted access to classical communication, while genuine quantum communication is suppressed.
    We considered different operational restrictions, demonstrating both secure and breakable censorship regimes.
    Moreover, we considered the censorship of asymmetry in which the lack of a shared reference frame limits the users communication.
    There, we considered finite symmetry groups for which catalysis is always possible, and thus censorship is breakable when independent parties assist the quantum communication.

    Resource censorship might be a fruitful endeavor for future research, as it connects to broader phenomena in quantum information, such as quantum catalysis \cite{JP99,CC10}, resource embezzlement \cite{DH03,TS22}, and entanglement-assisted quantum communication \cite{BS99,B02,HD08}.
    For example, we considered cases in which the users consume the resources provided by independent (uncensored) parties, but one could imagine the independents only borrowing the catalyst $\omega$ to the users.
    In this picture, the users seek the broadcasting of a resource \cite{BB07,MS19,SG24}, i.e., $\Omega^{\otimes N}(\rho)\otimes \omega \to \omega \otimes \omega$. 
    
    Moreover, if the resource-reducing channel $\Omega$ is interpreted as an unwanted source of noise (i.e., a known quantum error), then breaking the censorship corresponds to an entanglement-assisted error correction \cite{LB13,WB09,BDH06}.
    Beyond the foundational insights gained by this framework, censorship protocols may have practical implications in the (far-distant?) future.
    In places where quantum and post-quantum cryptography are not at the state of the art, the surveillance of quantum resources of untrusted parties in a network may be a viable alternative to making ones own communication more secure. 
    
    \acknowledgements
    We gratefully acknowledge financial support from Denmarks Grundforskningsfond (DNRF 139, Hy-Q Center for
    Hybrid Quantum Networks) and the Alexander von Humboldt Foundation (Feodor Lynen Research Fellowship).

	\appendix

    \section{Orbits of resource-reducing channels}
    \label{app:resource-reducing}

    A unitary channel $\Phi(\rho)=U\rho U^\dag$, with $U^\dag U=\mathbb{1}_A$, is called a free unitary if each $\Phi\in\mathcal{O}(A\to B)$ and $\Phi^{-1}\in\mathcal{O}(B\to A)$ are free operations.
    We can use free unitaries to obtain different resource-reducing channels for a QRT. 
    \begin{proposition}
        \normalfont
        \label{prop:orbit}
        Let $\Omega$ be resource-reducing.
        Then,
        \begin{equation*}
            \Omega_\Phi=\Phi^{-1}\circ\Omega\circ\Phi,
        \end{equation*}
        is resource-reducing, for any free unitary $\Phi$.
    \end{proposition}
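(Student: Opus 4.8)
The plan is to check the two defining properties of Definition~\ref{def:RR} for $\Omega_\Phi$, after first noting that it is a bona fide free operation. For the latter, $\Phi$, $\Omega$, and $\Phi^{-1}$ are all free operations (the first two by hypothesis, $\Phi^{-1}$ precisely because $\Phi$ is a \emph{free} unitary), and the underlying Hilbert spaces all share the same dimension, so the composition $\Omega_\Phi=\Phi^{-1}\circ\Omega\circ\Phi$ is well defined and lies in $\mathcal{O}$ by closure under composition.

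I would dispatch the freeness-preserving property (ii) first, as it is routine. For $\sigma\in\mathcal{F}(A)$, freeness of $\Phi$ gives $\Phi(\sigma)\in\mathcal{F}$; condition (ii) for $\Omega$ then gives $\Omega(\Phi(\sigma))=\Phi(\sigma)$; and applying $\Phi^{-1}$ recovers $\Omega_\Phi(\sigma)=\Phi^{-1}(\Phi(\sigma))=\sigma$.

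The resource-reducing property (i) is the crux, and I would argue by contradiction via conjugation. Fix $\rho\notin\mathcal{F}(A)$ and put $\rho'=\Phi(\rho)$. A short preliminary step shows $\rho'\notin\mathcal{F}$: if $\Phi(\rho)$ were free, then applying the free operation $\Phi^{-1}$ would force $\rho=\Phi^{-1}(\Phi(\rho))$ to be free, contradicting the choice of $\rho$. Now suppose for contradiction that $\Omega_\Phi(\rho)\rightarrow\rho$, i.e. $\Lambda(\Phi^{-1}(\Omega(\rho')))=\rho$ for some $\Lambda\in\mathcal{O}$. Applying $\Phi$ to both sides and grouping the outer maps as $\Lambda'=\Phi\circ\Lambda\circ\Phi^{-1}$ gives $\Lambda'(\Omega(\rho'))=\rho'$. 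Since $\Lambda'$ is a composition of free operations it is itself free, so this says $\Omega(\rho')\rightarrow\rho'$ with $\rho'\notin\mathcal{F}$---contradicting that $\Omega$ is resource reducing. Hence $\Omega_\Phi(\rho)\nrightarrow\rho$, which is (i).

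The only genuinely delicate point is the preliminary claim $\Phi(\rho)\notin\mathcal{F}$, and with it the passage $\Lambda\mapsto\Phi\circ\Lambda\circ\Phi^{-1}$: both rely on $\Phi^{-1}$ being free, which is exactly the content of the hypothesis that $\Phi$ is a free \emph{unitary} rather than merely a free operation. Were the inverse not free, the conjugated map $\Lambda'$ could fall outside $\mathcal{O}$ and the reduction to the resource-reducing property of $\Omega$ would collapse.
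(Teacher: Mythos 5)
Your proof is correct and follows essentially the same route as the paper's: verify freeness by closure under composition, transfer property (ii) through $\Phi$ and back, and obtain property (i) by reducing a hypothetical recovery of $\rho$ to a recovery of $\Phi(\rho)$ from $\Omega(\Phi(\rho))$, contradicting that $\Omega$ is resource reducing. Your write-up is in fact slightly more careful than the paper's, which compresses the conjugation step $\Lambda\mapsto\Phi\circ\Lambda\circ\Phi^{-1}$ and the claim $\Phi(\rho)\notin\mathcal{F}$ into the phrase ``applying the free unitary $\Phi^{-1}$ to both sides.''
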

    \begin{proof}
        $\Omega_\Phi$ is a free operation, due to $\mathcal{O}$ being closed under composition.
        Next, we show that $\Omega_\Phi(\rho)\nrightarrow \rho$, for all $\rho\notin\mathcal{F}$.
        This holds, because of clause (i) in Def. \ref{def:RR}, i.e.,
        \begin{equation}
            \label{eq:A3}
            \Omega[\Phi(\rho)]\nrightarrow \Phi(\rho)\notin\mathcal{F}.
        \end{equation}
        Applying the free unitary $\Phi^{-1}$ to both sides of Eq. \eqref{eq:A3} yields (i), i.e., $\Omega_\Phi(\rho)\nrightarrow \rho$.
        Finally, $\Omega_\Phi$ preserves free states, $\Omega_\Phi(\sigma)=\sigma$, which follows from $\Omega(\Phi(\sigma))=\Phi(\sigma)\in\mathcal{F}$ being free and $\Omega$ satisfying (ii).
    \end{proof}

    As an example, consider the QRT of coherence (Sec.~\ref{sec:coherence}).
    The dephasing channel 
    \begin{equation}
        \Delta(\rho) = \sum_a \ket{a}\bra{a}\rho \ket{a}\bra{a},
    \end{equation}
    is a coherence-destroying channel for this QRT \cite{LH17}, i.e., $\Delta(\rho)\in\mathcal{F}(B)$, for any $\rho\in\mathcal{D}(A)$. 
    Due to $\Delta(\sigma)=\sigma$, for all $\sigma\in\mathcal{F}(A)$, it satisfies Def. \ref{def:RR} of a coherence-reducing channel.
    Any unitary that does not generate coherence [cf. Eq.~\eqref{eq:res-non-gen}] is of the form $U=\sum_{a=0}^{d-1} e^{i\phi_a}\ket{\pi(a)}\bra{a}$, where $\phi_a$ is a phase and $\pi$ denotes permutation of the incoherent basis $\ket{a}$ \cite{CG19}.
    A direct computation shows that
    \begin{equation}
        \begin{split}
            \Delta_{\Phi}(\rho) & = \big(\Phi^{-1}\circ\Delta\circ\Phi\big)(\rho),\\
            & = \sum_a  U^\dag\ket{a}\bra{a} U\rho  U^\dag\ket{a}\bra{a}U,\\
            & = \sum_a \ket{\pi(a)}\bra{\pi(a)}\rho \ket{\pi(a)}\bra{\pi(a)}=\Delta(\rho),
        \end{split}
    \end{equation}
    for any free unitary channel $\Phi(\rho)=U\rho U^\dag$.
    In particular, $\Delta$ is the unique coherence-destroying channel for this QRT.
    This provides an alternative understanding of the uniqueness of $\Delta$ established in Ref. \cite{G17}.

\end{document}